\newcommand*{\symDefine}[2]{\newcommand*{#1}{#2}}
\symDefine{\symMVP}{\text{MVP}}
\symDefine{\symUpdateEvent}{A}
\symDefine{\symMasked}{a}
\symDefine{\symThreshold}{a}
\symDefine{\symBase}{b}
\symDefine{\symBiasCorrectionFactorConstant}{c}
\symDefine{\symSomeConstant}{c}
\symDefine{\symNumExtraBits}{d}
\symDefine{\symFunc}{f}
\symDefine{\symOtherFunc}{g}
\symDefine{\symRegMartingale}{h}
\symDefine{\symHashBit}{h}
\symDefine{\symRegAddr}{i}
\symDefine{\symIndexJ}{j}
\symDefine{\symIndexL}{l}
\symDefine{\symUpdateVal}{k}
\symDefine{\symIndexBit}{l}
\symDefine{\symNumReg}{m}
\symDefine{\symCardinality}{n}
\symDefine{\symCardinalityMax}{\symCardinality_\text{\normalfont max}}
\symDefine{\symCardinalityEstimatorML}{\symCardinalityEstimator_\text{\normalfont ML}}
\symDefine{\symCardinalityEstimator}{\hat{\symCardinality}}
\symDefine{\symCardinalityEstimatorMartingale}{\symCardinalityEstimator_\text{\normalfont martingale}}
\symDefine{\symPrecision}{p}
\symDefine{\symProbability}{\Pr}
\symDefine{\symBitsForMax}{q}
\symDefine{\symRegister}{r}
\symDefine{\symHashTokenBit}{s}
\symDefine{\symShift}{s}
\symDefine{\symExtraBits}{t}
\symDefine{\symTimeIndex}{t}
\symDefine{\symTokenSet}{T}
\symDefine{\symMaxUpdateVal}{u}
\symDefine{\symMaxUpdateValMin}{{\symMaxUpdateVal_\text{\normalfont min}}}
\symDefine{\symMaxUpdateValMax}{{\symMaxUpdateVal_\text{\normalfont max}}}
\symDefine{\symHashTokenParameter}{v}
\symDefine{\symNumBits}{v}
\symDefine{\symToken}{w}
\symDefine{\symX}{x}
\symDefine{\symXOld}{\symX_\text{\normalfont old}}
\symDefine{\symXZero}{\hat{\symX}}
\symDefine{\symY}{y}
\symDefine{\symZ}{z}
\symDefine{\symLikelihood}{\mathcal{L}}
\symDefine{\symShannon}{\mathcal{H}}
\symDefine{\symComplexity}{\mathcal{O}}
\symDefine{\symExpFunc}{\phi}
\symDefine{\symAlphaContribFunc}{\omega}
\symDefine{\symLikelihoodFuncExponentOne}{\alpha}
\symDefine{\symLikelihoodFuncExponentTwo}{\beta}
\symDefine{\symLikelihoodFuncExponentTwoSum}{\sigma_0}
\symDefine{\symLikelihoodFuncExponentTwoPow}{\sigma_1}
\symDefine{\symGammaFunc}{\Gamma}
\symDefine{\symDensity}{\rho}
\symDefine{\symDensityUpdate}{\symDensity_\text{\normalfont update}}
\symDefine{\symDensityRegister}{\symDensity_\text{\normalfont reg}}
\symDefine{\symDensityToken}{\symDensity_\text{\normalfont token}}
\symDefine{\symZetaFunc}{\zeta}
\symDefine{\symStateChangeProbability}{\mu}
\symDefine{\symDelta}{\Delta}
\symDefine{\symSumA}{\varphi}
\symDefine{\symSumB}{\psi}
\symDefine{\symFactorOne}{\lambda}
\symDefine{\symFactorTwo}{\eta}
\DeclareMathOperator*{\symVariance}{Var}
\DeclareMathOperator*{\symNLZ}{nlz}
\DeclareMathOperator*{\symExpMinusOne}{expm1}
\DeclareMathOperator*{\symLogPlusOne}{log1p}
\symDefine{\symBitwiseAnd}{\mathrel{\&}}
\symDefine{\symBitwiseOr}{\mathrel{|}}
\crefname{algocf}{alg.}{algs.}
\Crefname{algocf}{Algorithm}{Algorithms}
\newcommand{\myAlg}[2][]{
 \ifthenelse{\isempty{#1}}%
 {\begin{figure}}% if #1 is empty
 {\begin{figure}[#1]}% if #1 is not empty
 \begingroup % trick algorithm2e into thinking we're in one column mode
 \csname @twocolumnfalse\endcsname
 \noindent
 \resizebox{\columnwidth}{!}{%
 \begin{minipage}{1.2\columnwidth}
 \begin{algorithm}[H]
 \DontPrintSemicolon
 {#2}
 \end{algorithm}
 \end{minipage}%
 }% <------------- end of \resizebox
 \endgroup
 \end{figure}
}
\renewcommand{\eqref}[1]{\textup{(\ref{#1})}}
\begin{document}
\title[ExaLogLog: Space-Efficient and Practical Approximate Distinct Counting up to the Exa-Scale]{ExaLogLog: Space-Efficient and Practical Approximate\\Distinct Counting up to the Exa-Scale}

\author{Otmar Ertl}
\affiliation{
	\institution{Dynatrace Research}
	\city{Linz}
	\country{Austria}
}
\email{otmar.ertl@dynatrace.com}

\begin{abstract}
	This work introduces ExaLogLog, a new data structure for approximate distinct counting, which has the same practical properties as the popular HyperLogLog algorithm. It is commutative, idempotent, mergeable, reducible, has a constant-time insert operation, and supports distinct counts up to the exa-scale. At the same time, as theoretically derived and experimentally verified, it requires \qty{43}{\percent} less space to achieve the same estimation error.
\end{abstract}

\maketitle

\section{Introduction}

Exact counting of distinct elements in a data set or a data stream is known to take linear space \cite{Alon1999}. However, the space requirements can be significantly reduced, if approximate results are sufficient. \ac{HLL} \cite{Flajolet2007} with improved small-range estimation \cite{Heule2013, Qin2016, Zhao2016, Ertl2017, Ting2019} has become the standard algorithm for approximate distinct counting. It achieves a relative standard error of $1.04/\sqrt{\symNumReg}$ up to distinct counts in the order of $2^{64}\approx1.8\cdot10^{19}$ using only $6\symNumReg$ bits \cite{Heule2013}.
Therefore, the query languages of many data stores (see e.g. documentation of Timescale, Redis, Oracle Database, Snowflake, Microsoft SQL Server, Google BigQuery, Vertica, Elasticsearch, Aerospike, Amazon Redshift, KeyDB, or DuckDB) offer special commands for approximate distinct counting that are usually based on \ac{HLL}. Query optimization \cite{Freitag2019, Pavlopoulou2022}, caching \cite{Wires2014}, graph analysis \cite{Boldi2011, Priest2018}, attack detection \cite{Chabchoub2014, Clemens2023}, network volume estimation \cite{Basat2018}, or metagenomics \cite{Baker2019, Marcais2019, Elworth2020, Breitwieser2018} are further applications of \ac{HLL}.

\ac{HLL} is actually very simple as exemplified in \Cref{alg:insertion_hll}. It typically consists of a densely packed array of 6-bit registers $\symRegister_0, \symRegister_1, \ldots, \symRegister_{\symNumReg-1}$ where the number of registers $\symNumReg$ is a power of 2, $\symNumReg = 2^\symPrecision$ \cite{Heule2013}. The choice of the precision parameter $\symPrecision$ allows trading space for better estimation accuracy. Adding an element requires calculating a 64-bit hash value. $\symPrecision$ bits are used to choose a register for the update. The \ac{NLZ} of the remaining $64-\symPrecision$ bits are interpreted as a geometrically distributed update value $\geq 1$ with success probability $\frac{1}{2}$, that is used to update the selected register. A register always holds the maximum of all its previous update values. Estimating the distinct count from the register values is more challenging, but can also be implemented using a few lines of code \cite{Flajolet2007, Ertl2017}.
\myAlg[t]{
	\caption{Inserts an element with 64-bit hash value $\langle\symHashBit_{63} \symHashBit_{62} \ldots \symHashBit_{0}\rangle_2$ into a \acl*{HLL} consisting of $\symNumReg = 2^\symPrecision$ ($\symPrecision\geq 2$) 6-bit registers $\symRegister_0, \symRegister_1, \ldots, \symRegister_{\symNumReg-1}$ with initial values $\symRegister_\symRegAddr = 0$.}
	\label{alg:insertion_hll}
	$\symRegAddr\gets \langle\symHashBit_{63} \symHashBit_{62}\ldots\symHashBit_{64-\symPrecision}\rangle_2$\Comment*[r]{extract register index}
	$\symMasked\gets \langle\,\underbracket[0.5pt][1pt]{0\ldots 0}_{\scriptscriptstyle\symPrecision}\!\symHashBit_{63-\symPrecision}\symHashBit_{62-\symPrecision} \ldots\symHashBit_{0}\rangle_2$\Comment*[r]{mask register index bits}
	$\symUpdateVal \gets \symNLZ(\symMasked) -\symPrecision + 1$\Comment*[r]{\minibox[t]{update value $\symUpdateVal \in [1, 65-\symPrecision]\subseteq [1,63]$\\$\symNLZ$ returns the \acl*{NLZ}}}
	$\symRegister_\symRegAddr \gets \max(\symRegister_\symRegAddr, \symUpdateVal)$\Comment*[r]{update register}
}
\ac{HLL} owes its popularity to the following features allowing it to be used in distributed systems \cite{Ertl2024}:
\begin{description}[style=unboxed,leftmargin=0cm]
	\item[Speed:] Element insertion is a fast and allocation-free operation with a constant time complexity independent of the sketch size. In particular, given the hash value of the element, the update requires only a few CPU instructions.
	\item [Idempotency:] Further insertions of the same element will never change the state. This is actually a natural property every algorithm for distinct counting should support to prevent duplicates from changing the result.
	\item [Mergeability:] Partial results calculated over subsets can be easily merged to a final result. This is important when data is distributed, is processed in parallel, or needs to be aggregated.
	\item [Reproducibility:] The result does not depend on the processing order, which often cannot be guaranteed in practice anyway. Reproducibility is achieved by a commutative insert operation and a commutative and associative merge operation.
	\item [Reducibility:] The state can be reduced to a smaller state corresponding to a smaller precision parameter. The reduced state is identical to that obtained by direct recording with lower precision. This property allows adjusting the precision without affecting the mergeability with older records.
	\item [Estimation:] A fast and robust estimation algorithm ensures nearly unbiased estimates with a relative standard error bounded by a constant over the full range of practical distinct counts.
	\item [Simplicity:] The implementation requires only a few lines of code. The entire state can be stored in a single byte array of fixed length which makes serialization very fast and convenient. Furthermore, add and in-place merge operations do not require any extra memory allocations.
\end{description}

Until recently \ac{HLL} was the most space-efficient practical data structure having all these desired properties. Space efficiency can be measured in terms of the \ac{MVP} \cite{Pettie2021a}, which is the relative variance of the (unbiased) distinct-count estimate $\symCardinalityEstimator$ multiplied by the storage size in bits
\begin{equation}
	\label{equ:mvp_def}
	\symMVP := \symVariance(\symCardinalityEstimator/\symCardinality)\times(\text{storage size in bits}),
\end{equation}
where $\symCardinality$ is the true distinct count.
If the \ac{MVP} is asymptotically (for sufficiently large $\symCardinality$) a constant specific to the data structure, it can be used for comparison as it eliminates the general inverse dependence of the relative estimation error on the root of the storage size (see \Cref{fig:memory_over_error}).
Most \ac{HLL} implementations use 6-bit registers \cite{Heule2013} to support distinct counts beyond the billion range, resulting in a theoretical \ac{MVP} of 6.48 \cite{Pettie2021a}. A recent theoretical work conjectured a general lower bound of 1.98 for the \ac{MVP} of sketches supporting mergeability and reproducibility \cite{Pettie2021}, which shows the potential for improvement.

\begin{figure}
	\centering
	\includegraphics[width=\linewidth]{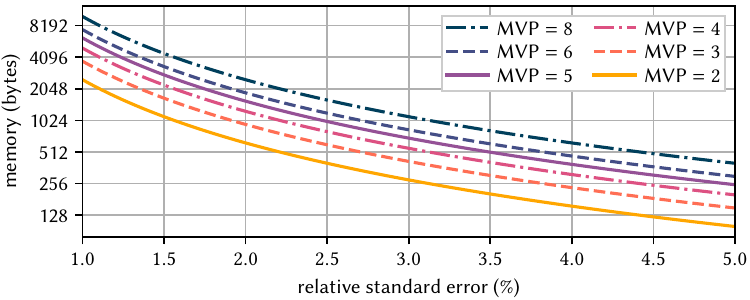}
	\caption{\boldmath The memory over the relative standard error for different \acfp*{MVP} following \eqref{equ:mvp_def}.}
	\label{fig:memory_over_error}
\end{figure}

\subsection{Related Work}
In the past, different approaches have been proposed to outperform the space efficiency of \ac{HLL}. However, many of them sacrificed at least one of the properties listed above \cite{Chen2011,Sedgewick2022,Xiao2020,Lu2023,Pettie2021a,Janson2024}. Therefore, in the following, we focus only on algorithms that support at least mergeability, idempotency, and reproducibility, which we consider to be essential properties for practical applications in distributed systems.

Lossless compression can significantly reduce the storage size of \ac{HLL} \cite{Durand2004,Scheuermann2007,Lang2017}. Since the compressed state prevents random access to registers as required for insertion, bulking is needed to realize at least amortized constant update times. The required buffer partially cancels out the memory savings. Recent techniques avoid buffering of insertions. The Apache DataSketches library \cite{ApacheDataSketches} provides an implementation using 4 bits per register to store the most frequent values relative to a global offset. Out of range values are kept separately in an associative array. Overall, this leads to a smaller \ac{MVP} but also to a more expensive insert operation. Its runtime is proportional to the memory size in the worst case, because all registers must be updated whenever the global offset is increased. \ac{HLLL} \cite{Karppa2022} takes this strategy to the extreme with 3-bit registers and achieves a space saving of around \qty{40}{\percent} at the expense of an insert operation which, except for very large numbers, has been reported to be on average more than an order of magnitude slower compared to \ac{HLL} \cite{Karppa2022}.

Compression can also be applied to other data structures. \Ac{PCSA} \cite{Flajolet1985} is a predecessor of \ac{HLL}, also known as FM-sketch. Although less space-efficient when uncompressed, its compressed state leads to a smaller \ac{MVP} than that of \ac{HLL} \cite{Scheuermann2007, Lang2017}. The \ac{CPC} sketch as part of the Apache DataSketches library \cite{ApacheDataSketches} uses this finding. The serialized compressed representation of the \ac{CPC} sketch achieves a \ac{MVP} of around 2.31 \cite{SketchesFeatureMatrix} that is already quite close to the conjectured lower bound of 1.98 \cite{Pettie2021}. However, this is achieved by a costly consolidation and compression step, as the in-memory representation is more than twice as large, since an amortized constant runtime of insertions can only be achieved by bulking.

Recently, data structures have been proposed that are more space-efficient than \ac{HLL} while not giving up constant-time insertions. \ac{EHLL} \cite{Ohayon2021} extends the \ac{HLL} registers from 6 to 7 bits to store not only the maximum update value, but also whether there was an update with a value smaller by one. This additional information can be used to obtain more accurate estimates. In particular, the \ac{MVP} is reduced by \qty{16}{\percent} to \num{5.43}. Inspired by this result,
we analyzed, if even more additional bits used to store the information about the occurrence of smaller update values could further improve the \ac{MVP} \cite{Ertl2024}. The result of our theoretical analysis led to \ac{ULL}, which uses two extra bits and achieves a \ac{MVP} of 4.63, an improvement of \qty{28}{\percent} over \ac{HLL}. Our theoretical analysis also showed that further space savings would be possible with even more additional bits if, at the same time, success probabilities smaller than $\frac{1}{2}$ were used for the geometric distribution of the update values. However, as the generation of such update values is more complicated, this approach has not been pursued so far.

Another recent data structure, SpikeSketch \cite{Du2023}, has chosen a different approach and combines update values following a geometric distribution with a success probability $\frac{3}{4}$ and a special lossy encoding scheme to improve space efficiency. It preserves mergeability and idempotency and also has a constant update time. Unfortunately, our experiments presented later could not confirm the claimed \ac{MVP} of 4.08 over the entire range of distinct counts. We found values that are significantly larger for distinct counts below \num{e4} caused by lossy compression and also the stepwise smoothing.

This work focuses on distinct counting of single, possibly distributed,
data flows. Data sketches for (non-distinct) counting \cite{Morris1978, Stanojevic2007, Punter2023} or distinct counting of multiple flows \cite{Xiao2015, Ting2019} were thus out of scope.

\begin{table}[t]
	\caption{Notations.}
	\label{tab:notation}
	\scriptsize
	\begin{tabular*}{\linewidth}{@{\extracolsep{\fill}} ll }
		\toprule
		Symbol
		&
		Description
		\\
		\midrule
		$\lfloor\ldots\rfloor$
		&
		floor function, e.g. $\lfloor 3.7\rfloor = 3$
		\\
		$\langle\ldots\rangle_2$
		&
		binary representation, e.g. $\langle 110\rangle_2 = 6$
		\\
		$\symBitwiseOr$
		&
		bitwise OR operation, e.g. $\langle1001\rangle_2 \symBitwiseOr \langle1010\rangle_2 = \langle1011\rangle_2$
		\\
		$\symBitwiseAnd$
		&
		bitwise AND operation, e.g. $\langle1001\rangle_2 \symBitwiseAnd \langle1010\rangle_2 = \langle1000\rangle_2$
		\\
		$\symNLZ$
		&
		\acl*{NLZ} if the argument is interpreted as unsigned 64-bit value,
		\\
		&
		e.g. $\symNLZ(\langle 10110\rangle_2)=59$
		\\
		$\symCardinality$
		&
		distinct count
		\\
		$\symCardinalityEstimator$
		&
		distinct count estimate
		\\
		$\symBase$
		&
		base, $\symBase > 1$, defines distribution of update values, compare \eqref{equ:geometric}
		\\
		$\symPrecision$
		&
		precision parameter
		\\
		$\symNumReg$
		&
		number of registers, $\symNumReg = 2^\symPrecision$
		\\
		$\symBitsForMax$
		&
		number of bits used for storing an update value
		\\
		$\symNumExtraBits$
		&
		number of additional register bits to indicate updates with smaller values
		\\
		$\symExtraBits$
		&
		parameter of approximated update value distribution, compare \eqref{equ:step_dist}
		\\
		$\symRegister_\symRegAddr$
		&
		value of $\symRegAddr$-th register, $0\leq \symRegAddr < \symNumReg$, $0\leq \symRegister_\symRegAddr \leq (65 - \symPrecision - \symExtraBits)2^{\symExtraBits+\symNumExtraBits}+2^{\symNumExtraBits}-1$
		\\
		$\symDensityUpdate$
		&
		\acl*{PMF} of update values, see \eqref{equ:geometric} and \eqref{equ:step_dist}
		\\
		$\symDensityRegister$
		&
		\acl*{PMF} of register values, see \Cref{sec:pmf_reg}
		\\
		$\symDensityToken$
		&
		\acl*{PMF} of hash tokens, see \eqref{equ:pmf_hash_token}
		\\
		$\symHashTokenParameter$
		&
		hash token parameter, hash token takes $\symHashTokenParameter+6$ bits, see \Cref{sec:sparse_mode}
		\\
		$\symLikelihood$
		&
		likelihood function, $\symLikelihood = \symLikelihood(\symCardinality\vert\symRegister_0\ldots \symRegister_{\symNumReg-1})$, see \Cref{sec:ml_estimation,sec:sparse_mode}
		\\
		$\symGammaFunc$
		&
		gamma function, $\symGammaFunc(\symX):=\int_0^\infty \symY^{\symX-1}e^{-\symY}d\symY$
		\\
		$\symZetaFunc$
		&
		Hurvitz zeta function, $\symZetaFunc(\symX, \symY) := \sum_{\symMaxUpdateVal=0}^\infty (\symMaxUpdateVal + \symY)^{-\symX}= \frac{1}{\symGammaFunc(\symX)}\int_0^\infty \frac{\symZ^{\symX-1}e^{-\symY\symZ}}{1-e^{-\symZ}}d\symZ$
		\\
		$\symStateChangeProbability$
		&
		probability that the next distinct element causes a state change, see \Cref{sec:martingale_estimation}
		\\
		$\symExpMinusOne$
		&
		$\symExpMinusOne(\symX):=e^\symX -1$, built-in function available in most standard libraries
		\\
		$\symLogPlusOne$
		&
		$\symLogPlusOne(\symX):=\ln(1 + \symX)$, built-in function available in most standard libraries
		\\
		\bottomrule
	\end{tabular*}
\end{table}

\subsection{Summary of Contributions}

We introduce \acf{ELL}, which is based on a recently proposed and theoretically analyzed data structure \cite{Ertl2024} that generalizes \ac{HLL}, \ac{EHLL}, \ac{ULL}, and \ac{PCSA}. However, we replace the geometric distribution of the update values by a distribution for which it is easier to map a 64-bit hash value to a corresponding random value. When optimally configured, \ac{ELL} achieves a \ac{MVP} of 3.67 as theoretically predicted and experimentally confirmed. Compared to \ac{HLL} with 6-bit registers \cite{Heule2013}, which is still the standard algorithm in most applications, \ac{ELL} supports the same operating range up to the exa-scale, but requires up to \qty{43}{\percent} less space. In contrast to \acf{CPC} \cite{ApacheDataSketches,Lang2017} or \acf{HLLL} \cite{Karppa2022}, the insert operation takes constant time independent of the sketch size and therefore also independent of the configured precision.

We will show that the new update value distribution leads to a simple \ac{ML} equation for the \ac{ELL} state, which has only a small number of terms regardless of the chosen precision. We propose a robust and efficient algorithm based on Newton’s method to solve this equation and, hence, to quickly find the distinct count estimate.

Moreover, we considered martingale estimation, which is the standard estimation algorithm, if the data is not distributed and merging is not needed \cite{Ting2014,Cohen2015}. For this case, the theory also predicts significant reductions of the \ac{MVP} by up to \qty{33}{\percent} compared to \ac{HLL}, which was again confirmed experimentally.

Finally, we present a strategy to realize a sparse mode for \ac{ELL} by transforming the 64-bit hash values to smaller hash tokens which can be collected and equivalently used for insertions. We also demonstrate how the distinct count can be directly estimated from those hash tokens using \ac{ML} estimation.

A reference implementation of \ac{ELL} written in Java can be found at \url{https://github.com/dynatrace-research/exaloglog-paper}. This repository also contains all the source code and instructions for reproducing the results and figures presented in this work. \Cref{tab:notation} summarizes the notations used in the following.

\section{Data Structure}
\label{sec:data_structure}

Recently, we have introduced and theoretically analyzed a data structure that can be seen as generalization of \ac{HLL}, \ac{EHLL} and \ac{PCSA}, and that finally led us to \ac{ULL}, another special case \cite{Ertl2024}.
The generalized data structure consists of $\symNumReg = 2^\symPrecision$ registers. The update operation uses $\symPrecision$ bits of a uniformly distributed hash value for selecting a register and another geometrically distributed hash value with \ac{PMF}
\begin{equation}
	\label{equ:geometric}
	\symDensityUpdate(\symUpdateVal) = (\symBase - 1)\symBase^{-\symUpdateVal} \qquad \symUpdateVal \geq 1, \symBase > 1
\end{equation}
for updating the selected register. Each register consists of $\symBitsForMax + \symNumExtraBits$ bits. The first $\symBitsForMax$ bits store the maximum $\symMaxUpdateVal$ of all update values the register was updated with, and the remaining $\symNumExtraBits$ bits indicate the occurrences of update values from the range $[\symMaxUpdateVal - \symNumExtraBits,\symMaxUpdateVal - 1]$.
By definition, this data structure is idempotent, as duplicate insertions will never modify the state. Since the state may only change for distinct insertions, it enables distinct count estimation.

\subsection{Previous Theoretical Results}
\label{sec:prev_results}
We have derived and presented various expressions for the \ac{MVP} of this generalized data structure \cite{Ertl2024}.
If the registers are densely stored in a bit array, the theoretical \ac{MVP} for an efficient unbiased estimator meeting the Cramér-Rao bound is
\begin{equation}
	\label{equ:mvp_uncompressed}
	\textstyle
	\symMVP \approx
	\frac{(\symBitsForMax + \symNumExtraBits)\ln \symBase}{\symZetaFunc\!\left(2, 1 + \frac{\symBase^{-\symNumExtraBits}}{\symBase-1}\right)},
\end{equation}
where $\symZetaFunc$ is the Hurvitz zeta function as defined in \Cref{tab:notation}.
For \ac{ULL}, as a special case with $\symBase=2$ and $\symNumExtraBits=2$, we have shown that this \ac{MVP} can also be practically achieved using \ac{ML} estimation. We have also derived a first-order bias correction \cite{Cox1968} that can be applied to the \ac{ML} estimate $\symCardinalityEstimatorML$ according to
\begin{equation}
	\label{equ:ml_bias_correction}
	\symCardinalityEstimator = \frac{\symCardinalityEstimatorML}{1 + \frac{\symBiasCorrectionFactorConstant}{\symNumReg}}
	\quad
	\text{with}\
	\textstyle
	\symBiasCorrectionFactorConstant := (\ln \symBase)
	\left(
	1 +2\frac{\symBase^{-\symNumExtraBits}}{\symBase-1}
	\right)
	\frac{
		\symZetaFunc\!\left(3, 1 + \frac{\symBase^{-\symNumExtraBits}}{\symBase-1}\right)}{\left(\symZetaFunc\!\left(2, 1 + \frac{\symBase^{-\symNumExtraBits}}{\symBase-1}\right)\right)^2}.
\end{equation}
$\symBiasCorrectionFactorConstant$ is a constant dependent on the parameters. The bias correction factor $(1 + \frac{\symBiasCorrectionFactorConstant}{\symNumReg})^{-1}$ approaches 1 as $\symNumReg\rightarrow\infty$.

If the state size is measured in terms of the Shannon entropy, corresponding to optimal compression of the state, a different expression was obtained for the \ac{MVP}
\begin{equation}
	\label{equ:mvp_compressed}
	\textstyle
	\symMVP
	\approx
	\frac{
		\left(1+\frac{\symBase^{-\symNumExtraBits}}{\symBase-1}\right)^{\!-1}+
		\int_{0}^1 \symZ^{\frac{\symBase^{-\symNumExtraBits}}{\symBase-1}}
		\frac{
			(1-\symZ)
			\ln(1-\symZ)
		}{\symZ\ln \symZ}
		d\symZ
	}{
		\symZetaFunc\!\left(2, 1 + \frac{\symBase^{-\symNumExtraBits}}{\symBase-1}\right)\ln 2 }
	.
\end{equation}
This number is also known as \ac{FISH} number, for which a theoretical lower bound of 1.98 was postulated \cite{Pettie2021}.

In a non-distributed setting, where merging of sketches is not needed, martingale estimation can be used \cite{Ting2014,Cohen2015}. This estimation method is known to be efficient \cite{Pettie2021a} and leads to a smaller \ac{MVP} than \eqref{equ:mvp_uncompressed}. The corresponding asymptotic \ac{MVP}, if again the registers are densely stored in a bit array, is given by
\begin{equation}
	\label{equ:mvp_martingale}
	\textstyle
	\symMVP
	\approx
	\frac{(\symBitsForMax + \symNumExtraBits)\ln\symBase}{2}\left(1+\frac{\symBase^{-\symNumExtraBits}}{\symBase-1}\right).
\end{equation}
In contrast, if the registers are optimally compressed, the \ac{MVP} would be
\begin{equation}
	\label{equ:martingale_compressed}
	\textstyle
	\symMVP \approx
	\frac{
		1+
		\left(1+\frac{\symBase^{-\symNumExtraBits}}{\symBase-1}\right)
		\int_{0}^1 \symZ^{\frac{\symBase^{-\symNumExtraBits}}{\symBase-1}}
		\frac{
			(1-\symZ)
			\ln(1-\symZ)
		}{\symZ\ln \symZ}
		d\symZ
	}{
		2 \ln 2 }.
\end{equation}
This expression has a lower bound of 1.63 which corresponds to the theoretical limit \cite{Pettie2021a}.

\subsection{Approximated Update Value Distribution}

\begin{figure}
	\centering
	\includegraphics[width=\linewidth]{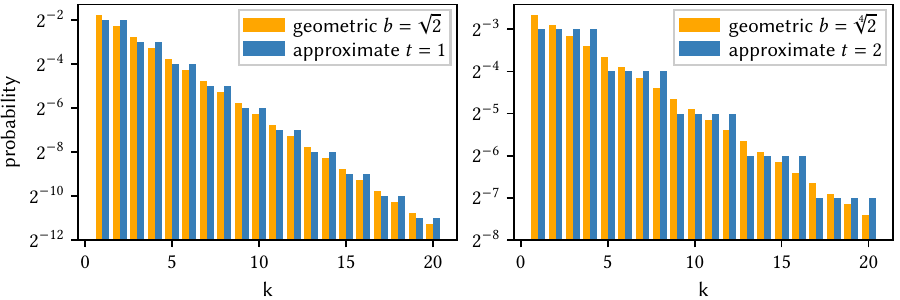}
	\caption{\boldmath Comparing the \acsp*{PMF} \eqref{equ:geometric} and \eqref{equ:step_dist} for $\symBase = 2^{2^{-\symExtraBits}}$.}
	\label{fig:dist_approx}
\end{figure}

Standard hash functions usually give a uniformly distributed 64-bit hash value. Getting a hash value that is geometrically distributed according to \eqref{equ:geometric} is not straightforward except for $\symBase=2$, where the \acf{NLZ} of a uniformly distributed hash value can be taken. For $\symBase\neq 2$, the hash value could be used to seed a pseudo-random number generator that generates values according to the desired distribution. However, this involves floating-point operations, which might be slow on devices without floating-point units. Alternatively, hash ranges that map to the same update value can be precomputed, which allows finding the update value via linear or binary search.

Both approaches will be slower and not branch-free, unlike for $\symBase=2$. They may also introduce inaccuracies for huge distinct counts, as the probabilities of update values will always be multiples of $2^{-64}$ when using 64-bit hashes.

Therefore, we propose to use a different update value distribution with \ac{PMF}
\begin{equation}
	\label{equ:step_dist}
	\textstyle
	\symDensityUpdate(\symUpdateVal) =
	\frac{1}{2^{\symExtraBits+1+\lfloor(\symUpdateVal-1)/2^\symExtraBits\rfloor}}
	\qquad\text{with $\symUpdateVal\geq 1, \symExtraBits\geq 0$}.
\end{equation}
This distribution approximates geometric distributions \eqref{equ:geometric} with base value $\symBase = 2^{2^{-\symExtraBits}}$ quite well as shown in \Cref{fig:dist_approx}. The reason is that chunks of $2^\symExtraBits$ subsequent update values have always the same total probability. In other words, $\sum_{\symUpdateVal=\symSomeConstant2^\symExtraBits +1}^{\symSomeConstant2^\symExtraBits+2^\symExtraBits}\symDensityUpdate(\symUpdateVal)=\frac{1}{2^{\symSomeConstant+1}}$ holds for \eqref{equ:geometric} and \eqref{equ:step_dist} and for all $\symSomeConstant\geq 0$.

A big advantage of \eqref{equ:step_dist} over \eqref{equ:geometric} is that update values can be easily and accurately generated from a 64-bit hash value using a few CPU instructions by taking $\symExtraBits$ bits of the hash value and determining the \ac{NLZ} of the remaining bits according to
\begin{equation}
	\label{equ:update_val}
	\text{update value}= \text{\acs*{NLZ}}\times 2^\symExtraBits + (\text{value of the $\symExtraBits$ bits})+1.
\end{equation}
Furthermore, as we will see in \Cref{sec:ml_estimation}, this distribution simplifies \acl{ML} estimation a lot compared to a geometric distribution with $\symBase\neq2$ thanks to its power-of-two probabilities.

\subsection{\acl*{ELL}}
\label{sec:exaloglog}

\myAlg[t]{
	\caption{Inserts an element with 64-bit hash value $\langle\symHashBit_{63} \symHashBit_{62} \ldots \symHashBit_{0}\rangle_2$ into an \acl*{ELL} with ($6+\symExtraBits+\symNumExtraBits$)-bit registers $\symRegister_0, \symRegister_1, \ldots, \symRegister_{\symNumReg-1}$ ($\symNumReg = 2^\symPrecision$) and initial values $\symRegister_\symRegAddr = 0$.}
	\label{alg:insertion}
	$\symRegAddr\gets \langle\symHashBit_{\symPrecision+\symExtraBits-1} \symHashBit_{\symPrecision+\symExtraBits-2}\ldots\symHashBit_{\symExtraBits}\rangle_2$\Comment*[r]{extract register index}
	$\symMasked\gets \langle\symHashBit_{63}\symHashBit_{62} \ldots\symHashBit_{\symPrecision+\symExtraBits}\!\underbracket[0.5pt][1pt]{1\ldots 1}_{\scriptscriptstyle\symPrecision+\symExtraBits}\rangle_2$\;
	$\symUpdateVal \gets \symNLZ(\symMasked)2^\symExtraBits +\langle\symHashBit_{\symExtraBits-1} \symHashBit_{\symExtraBits-2}\ldots\symHashBit_{0}\rangle_2+ 1$\Comment*[r]{\minibox[t]{compute update value \eqref{equ:update_val},\\$\symUpdateVal \in [1, (65-\symPrecision-\symExtraBits)2^\symExtraBits]$}}
	$\symMaxUpdateVal\gets\lfloor\symRegister_\symRegAddr/2^\symNumExtraBits\rfloor$\Comment*[r]{get max. update value from $\symRegister_\symRegAddr$, right-shift by $\symNumExtraBits$ bits}
	$\symDelta\gets \symUpdateVal - \symMaxUpdateVal$\;
	\uIf{$\symDelta>0$}{
		$\symRegister_\symRegAddr \gets\symUpdateVal\cdot2^\symNumExtraBits+ \lfloor(2^{\symNumExtraBits} + (\symRegister_\symRegAddr \bmod 2^\symNumExtraBits)) / 2^{\symDelta }\rfloor$\;
	}\ElseIf{$\symDelta<0 \wedge\symNumExtraBits+\symDelta\geq 0$}{
		$\symRegister_\symRegAddr \gets \symRegister_\symRegAddr \symBitwiseOr 2^{\symNumExtraBits+\symDelta}$\Comment*[r]{$\symBitwiseOr$ denotes bitwise OR}
	}
}

Our new data structure \acf{ELL} is based on the generalized data structure as we have recently proposed in \cite{Ertl2024} and also briefly discussed in \Cref{sec:data_structure}, however, using the approximate distribution \eqref{equ:step_dist} instead of the geometric distribution \eqref{equ:geometric} for the update values. Hence, the parameter $\symBase$ is replaced by $\symExtraBits$ were the two distributions are similar for $\symBase = 2^{2^{-\symExtraBits}}$. For $\symExtraBits=0 \Leftrightarrow \symBase=2$ the two distributions are even identical.

As the generalized data structure, \ac{ELL} consists of $\symNumReg=2^\symPrecision$ registers. Every time an element is inserted, a 64-bit hash value is computed of which $\symPrecision$ bits are used to select one of the registers. The remaining $64-\symPrecision$ bits are used to generate an update value according to \eqref{equ:step_dist}. Each register consists of $\symBitsForMax+\symNumExtraBits$ bits. The first $\symBitsForMax$ bits store the maximum update value $\symMaxUpdateVal$ seen so far for that register. The remaining $\symNumExtraBits$ bits memorize the occurrences of update values in the range $[\symMaxUpdateVal-\symNumExtraBits, \symMaxUpdateVal-1]$.

To live up its name, \ac{ELL} should, like \ac{HLL} with 6-bit registers \cite{Heule2013}, support distinct counts up to the exa-scale, which is sufficient for any conceivable practical use case.
As the maximum supported distinct count is roughly given by $\symBase^{2^\symBitsForMax}$ \cite{Ertl2024}, $\symBitsForMax= 6 + \symExtraBits$ must be chosen in order to get $\symBase^{2^\symBitsForMax} = (2^{2^{-\symExtraBits}})^{2^{6+\symExtraBits}}=2^{64}\approx \num{1.8e19}$.

The update procedure for inserting an element is summarized by \Cref{alg:insertion} and exemplified in \Cref{fig:running-example}. It splits the hash value into three parts. The first $64 - \symPrecision - \symExtraBits$ bits are used to determine the \acf{NLZ} which is therefore in the range $[0,64 - \symPrecision - \symExtraBits]$. The next $\symPrecision$ bits address the register, and the remaining last $\symExtraBits$ bits, in combination with the \ac{NLZ} of the first part, determine the update value according to \eqref{equ:update_val}. It is possible to use the bits in a different order. In particular, the three parts consisting of $64 - \symPrecision - \symExtraBits$, $\symPrecision$, and $\symExtraBits$ bits could be permuted. However, only if the $64 - \symPrecision - \symExtraBits$ bits for the \ac{NLZ} and the $\symPrecision$ bits for the address are adjacent and in this order, \ac{ELL} will be reducible, as described later in \Cref{sec:reducibility}.

Since \ac{ELL} uses 64-bit hashes, update values computed according to \eqref{equ:update_val} are limited by $(64 - \symPrecision-\symExtraBits)2^\symExtraBits + (2^\symExtraBits-1) + 1 = (65 - \symPrecision - \symExtraBits)2^\symExtraBits$.
For $\symPrecision \geq 2$, all possible update values thus fit into $6+\symExtraBits$ bits as $(65 - \symPrecision - \symExtraBits)2^\symExtraBits \leq 63\cdot 2^\symExtraBits \leq 2^{6+\symExtraBits}-1$ holds in any case. Adapting distribution \eqref{equ:step_dist} to incorporate the limitation of update values to the range $[1, (65 - \symPrecision - \symExtraBits)2^\symExtraBits]$ gives
\begin{equation}
	\label{equ:update_density}
	\textstyle
	\symDensityUpdate(\symUpdateVal)
	=
	\frac{1}{2^{\symExpFunc(\symUpdateVal)}}
	\qquad\text{with $\symUpdateVal\in[1, (65 - \symPrecision - \symExtraBits)2^\symExtraBits]$, $\symExtraBits\geq 0$,}
\end{equation}
where we introduced the function
\begin{equation}
	\label{equ:exponent_func}
	\symExpFunc(\symUpdateVal)
	:=
	\min(\symExtraBits + 1+ \lfloor (\symUpdateVal-1)/2^\symExtraBits \rfloor,64-\symPrecision).
\end{equation}

\begin{figure}[t]
	\centering
	\includegraphics[width=\columnwidth]{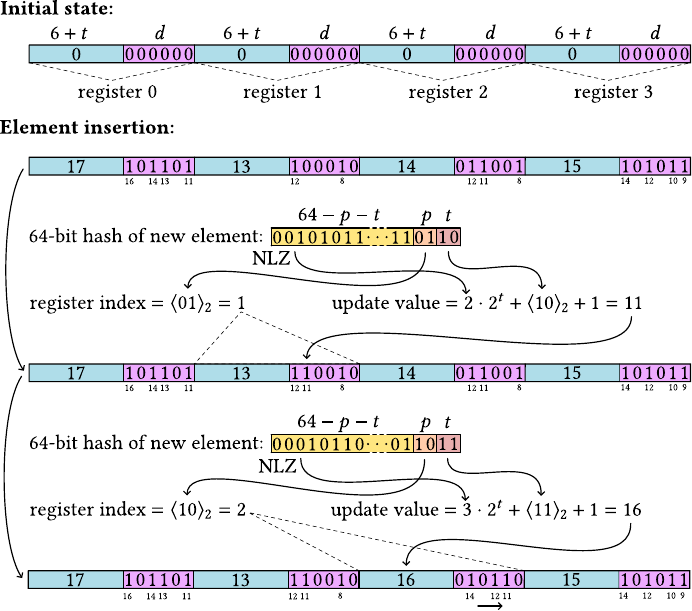}
	\caption{\boldmath Two element insertions into an \acl*{ELL} sketch with parameters $\symPrecision=2$, $\symExtraBits=2$, $\symNumExtraBits=6$ which has $2^\symPrecision=4$ registers with a size of $6+\symExtraBits+\symNumExtraBits=14$ bits.}
	\label{fig:running-example}
\end{figure}

\subsection{Choice of Parameters}
\label{sec:choice_param}

\begin{figure*}
	\centering
	\begin{minipage}{0.48\linewidth}
		\centering
		\includegraphics[width=\linewidth]{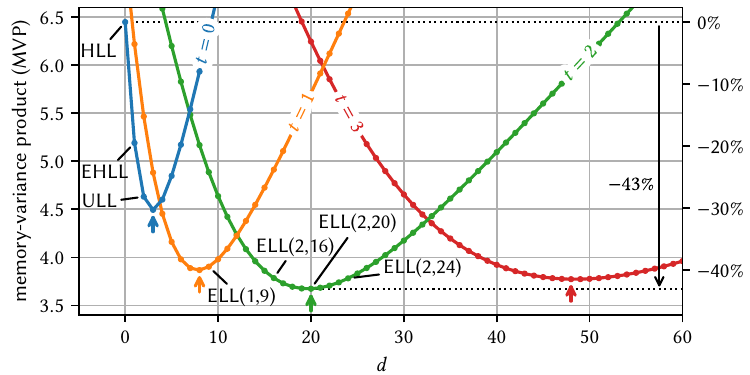}
		\caption{\boldmath The \acs*{MVP} according to \eqref{equ:mvp_uncompressed} with $\symBase = 2^{2^{-\symExtraBits}}$ and $\symBitsForMax= 6 + \symExtraBits$ when storing the registers in a bit array and using an efficient unbiased estimator. Arrows indicate minima.}
		\label{fig:mvp_ml}
	\end{minipage}
	\hfill
	\begin{minipage}{0.48\linewidth}
		\centering
		\includegraphics[width=\linewidth]{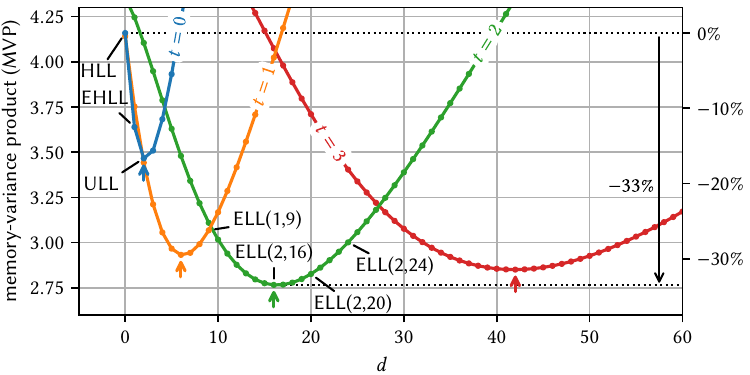}
		\caption{\boldmath The \acs*{MVP} according to \eqref{equ:mvp_martingale} with $\symBase = 2^{2^{-\symExtraBits}}$ and $\symBitsForMax= 6 + \symExtraBits$ when storing the registers in a bit array and using the martingale estimator. Arrows indicate minima.}
		\label{fig:mvp_martingale}
	\end{minipage}
\end{figure*}

\begin{figure*}
	\centering
	\begin{minipage}{0.48\linewidth}
		\centering
		\includegraphics[width=\linewidth]{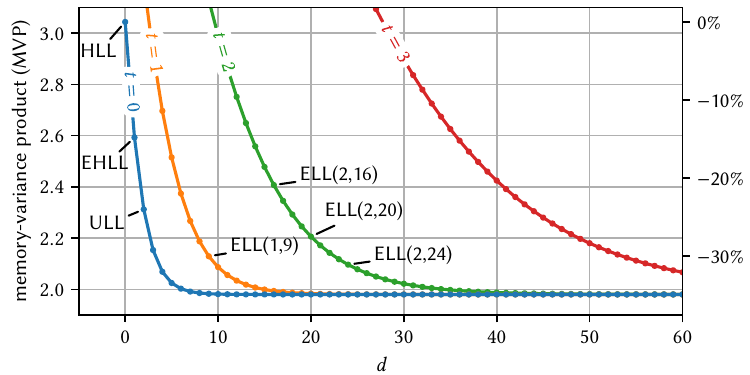}
		\caption{\boldmath The \acs*{MVP} according to \eqref{equ:mvp_compressed} with $\symBase = 2^{2^{-\symExtraBits}}$ when assuming optimal compression of the state and using an efficient unbiased estimator.}
		\label{fig:mvp_ml_compressed}
	\end{minipage}
	\hfill
	\begin{minipage}{0.48\linewidth}
		\centering
		\includegraphics[width=\linewidth]{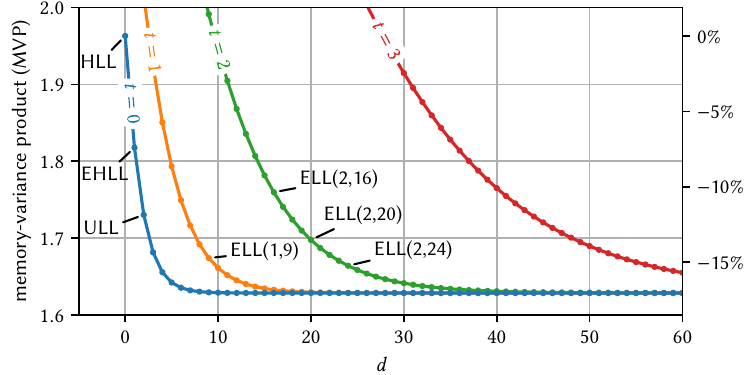}
		\caption{\boldmath The \acs*{MVP} according to \eqref{equ:martingale_compressed} with $\symBase = 2^{2^{-\symExtraBits}}$ when assuming optimal compression of the state and using the martingale estimator.}
		\label{fig:mvp_martingale_compressed}
	\end{minipage}
\end{figure*}

Under the assumption that the formulas for the \ac{MVP} presented in \Cref{sec:prev_results} are still a good approximation after the exchange of the update value distribution, we can use them to search for parameters that lead to a good space efficiency. As our experimental results will show later, the predicted \acp{MVP} are indeed very accurate despite the deviation from a geometric distribution which eventually justifies the made assumption.

We evaluated the \acp{MVP} given in \Cref{sec:prev_results} with $\symBitsForMax= 6 + \symExtraBits$ and $\symBase = 2^{2^{-\symExtraBits}}$ for $\symExtraBits\in\lbrace 0,1,2,3\rbrace$ and $\symNumExtraBits\geq 0$.
\Cref{fig:mvp_ml} shows the \ac{MVP} according to \eqref{equ:mvp_uncompressed} for an unbiased and efficient estimator.
\Cref{fig:mvp_martingale} shows the \ac{MVP} given by \eqref{equ:mvp_martingale} when using the martingale estimator that can only be used for non-distributed setups.
\Cref{fig:mvp_ml_compressed,fig:mvp_martingale_compressed} show the corresponding \acp{MVP} when the registers are assumed to be optimally compressed as given by \eqref{equ:mvp_compressed} and \eqref{equ:martingale_compressed}.

These figures allow identifying useful configurations. The optimal setting in \Cref{fig:mvp_ml} is $\symExtraBits=2$ and $\symNumExtraBits=20$, resulting in a total register size of $\symBitsForMax+\symNumExtraBits = 6 + \symExtraBits + \symNumExtraBits = 28$ bits and a theoretical \ac{MVP} of $3.67$ which is \qty{43}{\percent} less than that of \ac{HLL} with 6-bit registers. Since two registers can be packed into exactly 7 bytes, register access is not too complicated.

An interesting configuration is also $\symExtraBits=2$ and $\symNumExtraBits=24$ despite the larger theoretical \ac{MVP} with a value of $3.78$. The registers with a size of 32 bits allow very fast register access when stored in a 32-bit integer array. The 32-bit register alignment makes this configuration even convenient for concurrent updates using compare-and-swap instructions available on modern \acp{CPU}.
Furthermore, as \Cref{fig:mvp_ml_compressed,fig:mvp_martingale_compressed} indicate, this configuration is probably more efficient than $\symNumExtraBits=20$ or $\symNumExtraBits=16$ when using compression.

When setting $\symExtraBits=1$, the choice $\symNumExtraBits=9$ with a \ac{MVP} of $3.90$ is also worth mentioning. Although less space-efficient than the mentioned configurations with $\symExtraBits=2$, this setting also results in byte-aligned registers, as their size is exactly 16 bits. Variants with $\symExtraBits\geq 3$ lead to larger \acp{MVP} than for $\symNumExtraBits=2$. However, as they also lead to large $\symNumExtraBits$ and therefore to quite large register sizes, we do not consider them to be useful in practice.

In the case of martingale estimation, which can be used for non-distributed setups, the optimum is achieved for $\symExtraBits=2$ and $\symNumExtraBits=16$ (cf. \Cref{fig:mvp_martingale}), which leads to a \ac{MVP} of $2.77$, which is \qty{33}{\percent} below that of \ac{HLL} with 6-bit registers. As the register size is 24 bits and therefore fits exactly into 3 bytes, register access is also relatively simple.

As the theoretical \acp{MVP} only depend on $\symExtraBits$ and $\symNumExtraBits$, we will use the notation \acs*{ELL}($\symExtraBits$, $\symNumExtraBits$) to describe a specific class of \ac{ELL} sketches with equal \ac{MVP}. The third parameter, the precision parameter $\symPrecision$, controls the trade-off between space and estimation error.

\subsection{Relationship to Other Data Structures}
\label{sec:related_data_structures}
In \Cref{sec:exaloglog} we already mentioned that \ac{ELL} with $\symExtraBits=0$ is the same as the generalized data structure with $\symBase=2$ we have recently proposed \cite{Ertl2024}.
As this data structure is in turn a generalization of \ac{HLL} with $\symNumExtraBits=0$, \ac{EHLL} with $\symNumExtraBits=1$, and \ac{ULL} with $\symNumExtraBits=2$ \cite{Ertl2024}, these therefore correspond to \acs*{ELL}(0, 0), \acs*{ELL}(0, 1), and \acs*{ELL}(0, 2), respectively.
There is also a direct relationship to \ac{PCSA} \cite{Flajolet1985} and \ac{CPC} \cite{ApacheDataSketches, Lang2017}, as \acs*{ELL}(0, 64) stores exactly the same information, albeit encoded differently.
In addition, HyperMinHash \cite{Yu2022} corresponds to \acs*{ELL}($\symExtraBits$, 0), whose registers only store the maxima of update values. HyperMinHash uses an update value distribution equivalent to \eqref{equ:step_dist} but defines the ordering of register and update values differently.

\section{Statistical Inference}
\label{sec:statistical_inference}

To simplify the statistical model, we use the common Poisson approximation \cite{Flajolet2007, Ertl2017, Wang2023} that the number of inserted distinct elements is not fixed, but follows a Poisson distribution with mean $\symCardinality$. Consequently, since the updates are evenly distributed over all $\symNumReg$ registers, the number of updates with value $\symUpdateVal$ per register is again Poisson distributed with mean $\frac{\symCardinality}{\symNumReg}\symDensityUpdate(\symUpdateVal)$ when using $\symDensityUpdate$ defined in \eqref{equ:update_density}.
The probability that a register was updated with value $\symUpdateVal$ at least once, denoted by event $\symUpdateEvent_\symUpdateVal$, is therefore
\begin{equation}
	\label{equ:prob_update}
	\symProbability(\symUpdateEvent_\symUpdateVal)
	=
	1 - e^{-\frac{\symCardinality}{\symNumReg} \symDensityUpdate(\symUpdateVal)}.
\end{equation}

The probability that $\symMaxUpdateVal\in[1, (65 - \symPrecision - \symExtraBits)2^\symExtraBits]$ was the largest update value, which implies that there were no updates with values greater than $\symMaxUpdateVal$, is given by
\begin{multline}
	\label{equ:prob_max_update}
	\textstyle
	\symProbability( \symUpdateEvent_\symMaxUpdateVal\wedge \bigwedge_{\symUpdateVal=\symMaxUpdateVal+1}^{(65 - \symPrecision - \symExtraBits)2^\symExtraBits} \overline\symUpdateEvent_{\symUpdateVal})
	=
	\symProbability( \symUpdateEvent_\symMaxUpdateVal)
	\prod_{\symUpdateVal=\symMaxUpdateVal+1}^{(65 - \symPrecision - \symExtraBits)2^\symExtraBits}
	(1-\symProbability( \symUpdateEvent_\symUpdateVal))
	\\
	\textstyle
	=
	\left(1 - e^{
		-\frac{\symCardinality}{\symNumReg} \symDensityUpdate(\symMaxUpdateVal)}\right)
	\exp\!\left(
	-\frac{\symCardinality}{\symNumReg}
	\sum_{\symUpdateVal=\symMaxUpdateVal+1}^{(65 - \symPrecision - \symExtraBits)2^\symExtraBits}
	\symDensityUpdate(\symUpdateVal)
	\right)
	\\
	\textstyle
	=
	\left(1 - e^{
		-\frac{\symCardinality}{\symNumReg} \symDensityUpdate(\symMaxUpdateVal)}\right)
	e^{
			-\frac{\symCardinality}{\symNumReg}
			\symAlphaContribFunc(\symMaxUpdateVal) },
\end{multline}
where $\symAlphaContribFunc$ is defined as (see \Cref{lem:identity})
\begin{equation}
	\label{equ:alpha_contrib_func}
	\textstyle
	\symAlphaContribFunc(\symMaxUpdateVal)
	:=
	\sum_{\symUpdateVal = \symMaxUpdateVal+1}^{(65 - \symPrecision - \symExtraBits)2^\symExtraBits}\symDensityUpdate(\symUpdateVal)
	=
	\frac{2^{\symExtraBits}(1-\symExtraBits +\symExpFunc(\symMaxUpdateVal)) - \symMaxUpdateVal }{2^{\symExpFunc(\symMaxUpdateVal)}}.
\end{equation}

\subsection{Probability Mass Function for Registers}
\label{sec:pmf_reg}
As described in \Cref{sec:data_structure}, a register stores the maximum update value $\symMaxUpdateVal$ and also if there were updates with values in the range $[ \symMaxUpdateVal-\symNumExtraBits, \symMaxUpdateVal-1]$. Multiplying the probability \eqref{equ:prob_max_update} that $\symMaxUpdateVal$ was the maximum update value with $\symProbability(\symUpdateEvent_\symUpdateVal)$ or $(1-\symProbability(\symUpdateEvent_\symUpdateVal))$ (compare \eqref{equ:prob_update}) for all $\symUpdateVal\in[ \symMaxUpdateVal-\symNumExtraBits, \symMaxUpdateVal-1]$ dependent on whether an update with value $\symUpdateVal$ has occurred or not, gives the \acf{PMF} for a single register $\symRegister$:

\begin{description}
	\item[\boldmath Case $\symRegister=0$:]
		\begin{equation*}
			\textstyle
			\symDensityRegister(\symRegister\vert\symCardinality)=e^{-\frac{\symCardinality}{\symNumReg}\symAlphaContribFunc(0)} = e^{-\frac{\symCardinality}{\symNumReg}}.
		\end{equation*}

	\item[\boldmath Case $\symRegister = \symMaxUpdateVal 2^\symNumExtraBits + \langle\symIndexBit_1\ldots\symIndexBit_{\symMaxUpdateVal-1}\rangle_2 2^{\symNumExtraBits + 1- \symMaxUpdateVal}$ with $1 \leq \symMaxUpdateVal\leq \symNumExtraBits$:]
		\begin{multline*}
			\textstyle
			\symDensityRegister(\symRegister\vert\symCardinality) =
			(1 - e^{
					-\frac{\symCardinality}{\symNumReg} \symDensityUpdate(\symMaxUpdateVal)})
			e^{
					-\frac{\symCardinality}{\symNumReg}
					\symAlphaContribFunc(\symMaxUpdateVal) }
			\cdot
			\\
			\textstyle
			\cdot
			\prod_{\symIndexJ = 1}^{\symMaxUpdateVal-1}
			(1 - e^{
					-\frac{\symCardinality}{\symNumReg} \symDensityUpdate(\symMaxUpdateVal-\symIndexJ)})
			^{ \symIndexBit_\symIndexJ}
			(e^{
					-\frac{\symCardinality}{\symNumReg} \symDensityUpdate(\symMaxUpdateVal-\symIndexJ)})^{1-\symIndexBit_\symIndexJ}.
		\end{multline*}

	\item[\boldmath Case $\symRegister=\symMaxUpdateVal 2^\symNumExtraBits + \langle\symIndexBit_1\ldots\symIndexBit_{\symNumExtraBits}\rangle_2$ with $\symNumExtraBits+1 \leq \symMaxUpdateVal\leq (65 - \symPrecision - \symExtraBits)2^\symExtraBits$:]
		\begin{multline*}
			\textstyle
			\symDensityRegister(\symRegister\vert\symCardinality) =
			(1 - e^{
					-\frac{\symCardinality}{\symNumReg} \symDensityUpdate(\symMaxUpdateVal)})
			e^{
					-\frac{\symCardinality}{\symNumReg}
					\symAlphaContribFunc(\symMaxUpdateVal) }
			\cdot
			\\
			\textstyle
			\cdot
			\prod_{\symIndexJ = 1}^{\symNumExtraBits}
			(1 - e^{
					-\frac{\symCardinality}{\symNumReg} \symDensityUpdate(\symMaxUpdateVal-\symIndexJ)})
			^{ \symIndexBit_\symIndexJ}
			(e^{
					-\frac{\symCardinality}{\symNumReg} \symDensityUpdate(\symMaxUpdateVal-\symIndexJ)})^{1-\symIndexBit_\symIndexJ}.
		\end{multline*}
\end{description}

\subsection{Maximum-Likelihood Estimation}
\label{sec:ml_estimation}

Since the registers are independent due to the Poisson approximation, the log-likelihood function for register states $\symRegister_0, \symRegister_1,\ldots,\symRegister_{\symNumReg-1}$ can be written as
\begin{equation*}
	\textstyle
	\ln\symLikelihood =
	\ln\symLikelihood(\symCardinality\vert \symRegister_0,\ldots,\symRegister_{\symNumReg-1})
	=
	\sum_{\symRegAddr=0}^{\symNumReg-1} \ln \symDensityRegister(\symRegister_\symRegAddr\vert\symCardinality).
\end{equation*}
According to \eqref{equ:update_density}, $\symDensityUpdate$ is always a power of two from the set $\lbrace \frac{1}{2^{\symExtraBits+1}}, \frac{1}{2^{\symExtraBits+2}},\ldots,\frac{1}{2^{64-\symPrecision}}\rbrace$. Therefore, the log-likelihood function always has the shape
\begin{equation}
	\label{equ:log_likelihood_shape}
	\textstyle
	\ln\symLikelihood
	=-\frac{\symCardinality}{\symNumReg} \symLikelihoodFuncExponentOne+
	\sum_{\symMaxUpdateVal=\symExtraBits + 1}^{64 - \symPrecision}\symLikelihoodFuncExponentTwo_\symMaxUpdateVal \ln(1 - e^{-\frac{\symCardinality}{\symNumReg 2^\symMaxUpdateVal}} ),
\end{equation}
where the sum has at most $64-\symPrecision-\symExtraBits$ nonzero terms.
The coefficients $\symLikelihoodFuncExponentOne$ and $\symLikelihoodFuncExponentTwo_\symMaxUpdateVal$ depend just on the register states $\symRegister_0, \symRegister_1,\ldots,\symRegister_{\symNumReg-1}$ and can be computed according to \Cref{alg:coefficient_computations}. Since $\symIndexJ$ is the result of $\symExpFunc(\symUpdateVal)$ \eqref{equ:update_density}, it is bounded by $64-\symPrecision$ and all contributions to $\symLikelihoodFuncExponentOne$ are integer multiples of $\frac{1}{2^{64-\symPrecision}}$. Therefore, the summation can be performed with integer arithmetic only, if $\symLikelihoodFuncExponentOne' = \symLikelihoodFuncExponentOne\cdot 2^{64-\symPrecision}$ is computed instead.

\myAlg{
\caption{Computation of the coefficients of log-likelihood function \eqref{equ:log_likelihood_shape} for an \acl*{ELL} sketch with registers $\symRegister_0, \symRegister_1, \ldots, \symRegister_{\symNumReg-1}$.}
\label{alg:coefficient_computations}
$\symLikelihoodFuncExponentOne'\gets 0$\Comment*[r]{$\symLikelihoodFuncExponentOne'=\symLikelihoodFuncExponentOne\cdot2^{64-\symPrecision}$ is an integer}
$(\symLikelihoodFuncExponentTwo_{\symExtraBits+1}, \symLikelihoodFuncExponentTwo_{\symExtraBits+2}, \ldots, \symLikelihoodFuncExponentTwo_{64-\symPrecision})\gets (0, 0,\dots,0)$\;
\For(\Comment*[f]{iterate over all $\symNumReg$ registers}){$\symRegAddr\gets 0\ \KwTo\ \symNumReg-1$}{
$\symMaxUpdateVal\gets\lfloor\symRegister_\symRegAddr/2^\symNumExtraBits\rfloor$\Comment*[r]{get max. update value from $\symRegister_\symRegAddr$, right-shift by $\symNumExtraBits$ bits}
$\symLikelihoodFuncExponentOne'\gets\symLikelihoodFuncExponentOne' + 2^{64-\symPrecision}\symAlphaContribFunc(\symMaxUpdateVal)$\Comment*[r]{see \eqref{equ:alpha_contrib_func} for a definition of function $\symAlphaContribFunc$}
\If{$\symMaxUpdateVal \geq 1$}{
$\symIndexJ\gets \symExpFunc(\symMaxUpdateVal)$\Comment*[r]{see \eqref{equ:exponent_func} for a definition of function $\symExpFunc$}
$\symLikelihoodFuncExponentTwo_\symIndexJ\gets\symLikelihoodFuncExponentTwo_\symIndexJ + 1$\;
\If{$\symMaxUpdateVal \geq 2$}{
\For{$\symUpdateVal \gets \max(1, \symMaxUpdateVal-\symNumExtraBits)\ \KwTo \ \symMaxUpdateVal-1$} {
$\symIndexJ\gets \symExpFunc(\symUpdateVal)$\;
\uIf(\Comment*[f]{$\symBitwiseAnd$ denotes bitwise AND}){$(\symRegister_\symRegAddr \symBitwiseAnd 2^{\symNumExtraBits - \symMaxUpdateVal + \symUpdateVal}) = 0$ }{
$\symLikelihoodFuncExponentOne'\gets\symLikelihoodFuncExponentOne'+2^{64-\symPrecision-\symIndexJ}$\;
}
\Else{
	$\symLikelihoodFuncExponentTwo_\symIndexJ\gets\symLikelihoodFuncExponentTwo_\symIndexJ + 1$\;
}
}
}
}
}
$\symLikelihoodFuncExponentOne\gets\symLikelihoodFuncExponentOne'/2^{64-\symPrecision}$\;
}

In the following we consider the case where $\symLikelihoodFuncExponentOne$ and at least one of the coefficients $\symLikelihoodFuncExponentTwo_\symMaxUpdateVal$ are positive. $\symLikelihoodFuncExponentTwo_\symMaxUpdateVal$ all zero requires all registers to be in the initial state leading to a \ac{ML} estimate of zero. $\symLikelihoodFuncExponentOne=0$ can only occur, if all registers are saturated. In this case, which only occurs with a noteworthy probability for distinct counts that are entirely unrealistic, the \ac{ML} estimate would be infinite.
When introducing
\begin{align}
	\label{equ:defx}
	\symX
	 & := \textstyle\exp\!\left(\frac{\symCardinality}{\symNumReg 2^{\symMaxUpdateValMax} }\right) - 1,
	\\
	\symMaxUpdateValMin
	 & :=\textstyle\min_{\symExtraBits + 1\leq \symMaxUpdateVal\leq 64 - \symPrecision}\lbrace\symMaxUpdateVal\mid \symLikelihoodFuncExponentTwo_\symMaxUpdateVal > 0\rbrace,
	\nonumber
	\\
	\symMaxUpdateValMax
	 & :=\textstyle\max_{\symExtraBits + 1\leq \symMaxUpdateVal\leq 64 - \symPrecision} \lbrace\symMaxUpdateVal\mid \symLikelihoodFuncExponentTwo_\symMaxUpdateVal > 0\rbrace,
	\nonumber
	\\
	\label{equ:def_suma}
	\symSumA(\symX)
	 & :=
	\textstyle
	\symLikelihoodFuncExponentTwo_{\symMaxUpdateValMax}
	+
	\sum_{\symIndexJ=1}^{\symMaxUpdateValMax - \symMaxUpdateValMin}\symLikelihoodFuncExponentTwo_{\symMaxUpdateValMax - \symIndexJ}
	\prod_{\symIndexL=0}^{\symIndexJ-1}
	\frac{2}{(1+\symX)^{2^\symIndexL}+1},
\end{align}
the \ac{ML} equation can be equivalently written as
$\symFunc(\symX) = 0$ where $\symFunc$ is defined as
\begin{multline}
	\label{equ:ml_func}
	\textstyle
	\symFunc(\symX)
	:=
	-\symX(1+\symX)\frac{\partial}{\partial\symX}\ln\symLikelihood
	\\
	\textstyle
	=
	\symLikelihoodFuncExponentOne 2^{\symMaxUpdateValMax}\symX
	-
	\sum_{\symIndexJ=0}^{\symMaxUpdateValMax - \symMaxUpdateValMin} \frac{ \symLikelihoodFuncExponentTwo_{\symMaxUpdateValMax - \symIndexJ} 2^\symIndexJ \symX}{(1+\symX)^{2^\symIndexJ}-1}
	=
	\symLikelihoodFuncExponentOne 2^{\symMaxUpdateValMax}\symX
	-
	\symSumA(\symX).
\end{multline}
This function is strictly increasing and concave for $\symX\geq 0$ as shown in \Cref{lem:inc_and_concave}. As a result, it has a well-defined root, because $\symFunc(0)= -\sum_{\symIndexJ=\symMaxUpdateValMin}^{\symMaxUpdateValMax}\symLikelihoodFuncExponentTwo_\symIndexJ<0$ and $\symFunc(\infty) \rightarrow \infty > 0$.
If $\symXZero$ denotes the root of $\symFunc$, hence $\symFunc(\symXZero) = 0$, the \ac{ML} estimate $\symCardinalityEstimatorML$ is given according to \eqref{equ:defx} by
\begin{equation}
	\label{equ:final_estimate}
	\symCardinalityEstimatorML = \symNumReg 2^{\symMaxUpdateValMax} \ln(1+ \symXZero).
\end{equation}
As an optional last step, the \ac{ML} estimate can be corrected according to \eqref{equ:ml_bias_correction} with $\symBase=2^{2^{-\symExtraBits}}$ to reduce the bias.

The evaluation of $\symFunc$ is cheap, because the number of terms is limited by $\symMaxUpdateValMax-\symMaxUpdateValMin+1\leq 64-\symPrecision-\symExtraBits$. Moreover, as all occurring exponents are powers of two, they can be computed recursively by squaring using simple multiplications. As the denominator in \eqref{equ:def_suma} is always greater than or equal to 2, the evaluation is also numerically safe. To reduce the numerical error, we can replace the denominator by $2 +\symY_\symIndexL$ with
\begin{equation}
	\label{equ:y_def}
	\symY_\symIndexL := (1+\symX)^{2^\symIndexL}-1 \geq 0
\end{equation}
and use the recursion
\begin{equation}
	\label{equ:y_recursion}
	\symY_{\symIndexL+1}=\symY_{\symIndexL}\cdot(2 + \symY_{\symIndexL}).
\end{equation}
The product appearing in \eqref{equ:def_suma}
\begin{equation*}
	\textstyle
	\symFactorOne_\symIndexJ:=
	\prod_{\symIndexL=0}^{\symIndexJ-1}
	\frac{2}{(1+\symX)^{2^\symIndexL}+1}
\end{equation*}
can then be also computed recursively following
\begin{equation}
	\label{equ:factor_one_def}
	\textstyle
	\symFactorOne_{\symIndexJ+1}
	=
	\symFactorOne_{\symIndexJ} \frac{2}{2+\symY_{\symIndexJ}}.
\end{equation}

The simplicity of the \ac{ML} equation is a result of distribution \eqref{equ:step_dist}.
For comparison, geometrically distributed update values following \eqref{equ:geometric} would have led to significantly more terms. In addition, the computations of the resulting power expressions with real exponents would have been much more expensive. \Cref{app:numerical-root-finding} describes a robust and fast-converging algorithm for finding the root of the \ac{ML} equation.

\subsection{Martingale Estimation}
\label{sec:martingale_estimation}

A simple, efficient, and unbiased way to estimate the distinct count is to start from zero and increment the estimate, whenever the state of the sketch is modified, according to the inverse of the probability that such a modification occurs with the insertion of any unseen element. This online approach is known as martingale or \acf{HIP} estimation \cite{Ting2014, Cohen2015} and even leads to smaller estimation errors as already mentioned in \Cref{sec:prev_results,sec:choice_param}. However, martingale estimation is limited to cases where the data is not distributed and merging of sketches is not needed.

In addition to the estimate, the martingale estimator also keeps track of the current state change probability $\symStateChangeProbability$. Initially, $\symStateChangeProbability=1$ as the first update will certainly change the state. Whenever a register is modified, the probability of state changes for further elements decreases. The probability, that a new unseen element changes the \ac{ELL} state, is given by
\begin{equation}
	\label{equ:state_change_probability}
	\textstyle
	\symStateChangeProbability(\symRegister_0, \ldots, \symRegister_{\symNumReg-1}) = \sum_{\symRegAddr=0}^{\symNumReg-1} \symRegMartingale(\symRegister_\symRegAddr).
\end{equation}
$\symRegMartingale(\symRegister_\symRegAddr)$ is the probability that register $\symRegister_\symRegAddr$ is changed with the next new element. The function $\symRegMartingale$ is strictly monotonically decreasing and defined for a register value $\symRegister = \symMaxUpdateVal 2^\symNumExtraBits + \langle\symIndexBit_1\ldots\symIndexBit_{\symNumExtraBits}\rangle_2$ as
\begin{equation*}
	\textstyle
	\symRegMartingale(\symRegister)
	=
	\frac{1}{\symNumReg}(\symAlphaContribFunc(\symMaxUpdateVal)
	+
	\sum_{\symUpdateVal = \max(1, \symMaxUpdateVal-\symNumExtraBits)}^{\symMaxUpdateVal-1}
	(1 - \symIndexBit_{\symMaxUpdateVal-\symUpdateVal})\cdot\symDensityUpdate(\symUpdateVal))
\end{equation*}
when using $\symAlphaContribFunc$ from \eqref{equ:alpha_contrib_func} and $\symDensityUpdate$ from \eqref{equ:update_density}.

\myAlg{
	\caption{Incrementally updates the martingale estimate $\symCardinalityEstimatorMartingale$ and the state change probability $\symStateChangeProbability$ whenever a register is altered from $\symRegister$ to $\symRegister'$ ($\symRegister<\symRegister'$). Initially, $\symCardinalityEstimatorMartingale = 0$ and $\symStateChangeProbability=1$.}
	\label{alg:martingale}
	$\symCardinalityEstimatorMartingale\gets \symCardinalityEstimatorMartingale + \frac{1}{\symStateChangeProbability}$\Comment*[r]{update estimate}
	$\symStateChangeProbability\gets \symStateChangeProbability - \left(\symRegMartingale(\symRegister) - \symRegMartingale(\symRegister')\right)$ \Comment*[r]{\minibox[t]{update state change probability,\\$\symRegMartingale(\symRegister)>\symRegMartingale(\symRegister')$, compare \eqref{equ:state_change_probability}}}
}

The martingale estimator is incremented with every state change by $\frac{1}{\symStateChangeProbability}$ prior the update as demonstrated by \Cref{alg:martingale}. $\symStateChangeProbability$ itself can also be incrementally adjusted, such that the whole update takes constant time. The martingale estimator is unbiased and optimal, if mergeability is not needed \cite{Pettie2021a}.

\section{Practical Implementation}

\myAlg{
\caption{Merges two corresponding registers $\symRegister$ and $\symRegister'$ of \acl*{ELL} sketches with identical parameters $\symExtraBits$, $\symNumExtraBits$, and $\symPrecision$.}
\label{alg:merge_register}
\Fn{\FuncMergeReg{$\symRegister$, $\symRegister'$, $\symNumExtraBits$}}{
$\symMaxUpdateVal\gets\lfloor\symRegister/2^{\symNumExtraBits} \rfloor$\Comment*[r]{get max. update value from $\symRegister$, right-shift by $\symNumExtraBits$ bits}
$\symMaxUpdateVal'\gets\lfloor\symRegister'/2^{\symNumExtraBits} \rfloor$\Comment*[r]{get max. update value from $\symRegister'$, right-shift by $\symNumExtraBits$ bits}
\uIf{$\symMaxUpdateVal>\symMaxUpdateVal' \wedge \symMaxUpdateVal' > 0$}{
	\KwRet $\symRegister \symBitwiseOr \lfloor(2^{\symNumExtraBits} + (\symRegister' \bmod 2^{\symNumExtraBits} )) / 2^{\symMaxUpdateVal - \symMaxUpdateVal'}\rfloor$\Comment*[r]{$\symBitwiseOr$ denotes bitwise OR}
}\uElseIf{$\symMaxUpdateVal'>\symMaxUpdateVal\wedge \symMaxUpdateVal > 0$}{
	\KwRet $\symRegister'\symBitwiseOr\lfloor(2^{\symNumExtraBits} + (\symRegister \bmod 2^{\symNumExtraBits} ) ) / 2^{\symMaxUpdateVal' - \symMaxUpdateVal}\rfloor$\;
}\Else{
	\KwRet $\symRegister\symBitwiseOr\symRegister'$\;
}
}
}

Like other probabilistic data structures such as \ac{HLL}, \ac{ELL} also relies on high-quality 64-bit hash values for the elements. Known good hash functions are WyHash \cite{Yi}, Komihash \cite{Vaneev}, or PolymurHash \cite{Peters}.
Insert operations according to \Cref{alg:insertion} obviously take constant time and are very fast, because all statements can be translated into inexpensive \ac{CPU} instructions. Expressions of kind $\lfloor \symX / 2^\symY\rfloor$ can be realized by a right-shift operation by $\symY$ bits, and $\symX \bmod 2^\symY$ is the same as masking the lower $\symY$ bits. Furthermore, specific instructions exist on modern \acp{CPU} to obtain the \acf{NLZ}.

\subsection{Mergeability}
If two \ac{ELL} data structures are equally configured, thus they have equal $\symPrecision$, $\symExtraBits$, and $\symNumExtraBits$ values, they can be easily merged by pairwise merging of individual registers. A register stores the maximum update value $\symMaxUpdateVal$ in its upper $6+\symExtraBits$ and the occurrences of update values in the range $[\symMaxUpdateVal- \symNumExtraBits,\symMaxUpdateVal-1]$ in its lower $\symNumExtraBits$ bits (compare \Cref{fig:running-example}). Since the merged state is the result of the union of all updates, the merged register must finally store the common maximum update value, and the $\symNumExtraBits$ bits must indicate the combined occurrences of the next $\symNumExtraBits$ smaller update values relative to this common maximum. \Cref{alg:merge_register} demonstrates how efficient register merging can be realized using bitwise operations. The result of the merge operation is equivalent to inserting the union of all individual original elements, previously inserted into one of the two data structures, directly into an empty data structure using \Cref{alg:insertion}.

\ac{ELL} sketches are also mergeable if not all the parameters are equal as long as the sketches share the same $\symExtraBits$-parameter. If the sketch parameters are $(\symExtraBits, \symNumExtraBits, \symPrecision)$ and $(\symExtraBits, \symNumExtraBits', \symPrecision')$, respectively, they both can be reduced to an \ac{ELL} sketch with parameters $(\symExtraBits, \min(\symNumExtraBits,\symNumExtraBits'), \min(\symPrecision, \symPrecision'))$ first. This is useful for migration scenarios, if the precision $\symPrecision$ or parameter $\symNumExtraBits$ must be changed while mergeability with older records should be preserved.

\myAlg{
\caption{Reduces an \acf*{ELL} sketch with registers $\symRegister_0, \symRegister_1, \ldots, \symRegister_{\symNumReg-1}$ and parameters $\symExtraBits$, $\symNumExtraBits$, $\symPrecision$ to an \acs*{ELL} sketch with registers $\symRegister'_0, \symRegister'_1, \ldots, \symRegister'_{\symNumReg'-1}$ and parameters $\symExtraBits$, $\symNumExtraBits'$, $\symPrecision'$, where $\symNumReg = 2^\symPrecision$, $\symNumReg' = 2^{\symPrecision'}$, $\symNumExtraBits \geq \symNumExtraBits' \geq 0$, and $\symPrecision \geq \symPrecision' \geq 0$.
}
\label{alg:reduction}
$\symThreshold\gets (64 - \symExtraBits - \symPrecision)\cdot 2^\symExtraBits + 1$\;
\For{$\symRegAddr\gets 0\ \KwTo\ \symNumReg'-1$}{
$\symRegister'\gets 0$\;
\For{$\symIndexJ\gets 0\ \KwTo\ 2^{\symPrecision - \symPrecision'}-1$}{
$\symRegister\gets \lfloor\symRegister_{\symRegAddr + \symIndexJ \cdot \symNumReg'} / 2^{\symNumExtraBits - \symNumExtraBits'}\rfloor$\Comment*[r]{right-shift by $\symNumExtraBits-\symNumExtraBits'$ bits}
$\symMaxUpdateVal\gets\lfloor\symRegister/2^{\symNumExtraBits'} \rfloor$\Comment*[r]{right-shift by $\symNumExtraBits'$ bits}
\If(\Comment*[f]{satisfied if \ac*{NLZ} was $64- \symExtraBits - \symPrecision$ in \eqref{equ:update_val} for $\symMaxUpdateVal$}){$\symMaxUpdateVal \geq \symThreshold$}{
	\Comment*[r]{$\symRegister$ must be adapted, if $\symMaxUpdateVal$ was different at precision $\symPrecision'$}
	$\symShift\gets((\symPrecision - \symPrecision')-(64 - \symNLZ(\symIndexJ)))\cdot 2^\symExtraBits$\Comment*[r]{\minibox[t]{assuming $\symIndexJ$ has 64 bits\\$\Rightarrow \symNLZ(\symIndexJ)\in[0,64]$} }
	\If{$\symShift > 0$}{
		$\symNumBits\gets\symNumExtraBits' + \symThreshold - \symMaxUpdateVal$\;
		\lIf{$\symNumBits > 0$}{
			$\symRegister\gets \lfloor\symRegister/2^\symNumBits\rfloor\cdot 2^\symNumBits + \lfloor(\symRegister \bmod 2^\symNumBits)/2^\symShift\rfloor$
		}
		$\symRegister\gets\symRegister + \symShift\cdot2^{\symNumExtraBits'}$\;
	}
}
$\symRegister'\gets$ \FuncMergeReg{$\symRegister$, $\symRegister',\symNumExtraBits'$}\Comment*[r]{see \Cref{alg:merge_register}}
}
$\symRegister'_{\symRegAddr}\gets \symRegister'$\;
}
}

\subsection{Reducibility}
\label{sec:reducibility}
The reduction of the $\symNumExtraBits$-parameter is straightforward. Decrementing it from $\symNumExtraBits$ to $\symNumExtraBits'$ with $\symNumExtraBits\geq\symNumExtraBits'$ only requires right-shifting all registers by $\symNumExtraBits-\symNumExtraBits'$ bits.
The reduction of the precision from $\symPrecision$ to $\symPrecision'$ is more complex, as $2^{\symPrecision-\symPrecision'}$ registers need to be combined into one. However, due to the way in which the hash bits are consumed in \Cref{alg:insertion}, this is also possible in a lossless way, meaning that the result is the same as if all elements would be recorded directly using a sketch with the reduced parameters, as demonstrated by \Cref{alg:reduction}.

\subsection{Sparse Mode}
\label{sec:sparse_mode}

\acl{ELL} uses a fixed array of registers, which guarantees constant-time insertions. However, if space efficiency is more important, allocating that array from the beginning does not make sense if keeping the raw input data takes less space. Therefore, many data sketches start with a \emph{sparse} mode with a linearly scaling memory footprint and only switch to the \emph{dense} representation at the break-even point.

For \ac{ELL}, a sparse representation could be realized by just storing the 64-bit input hash values in a list. To save space, we can reduce those hash values to $(\symHashTokenParameter+6)$-bit values, which we call hash tokens, by keeping only information needed for insertions into \ac{ELL} sketches with $\symPrecision + \symExtraBits\leq \symHashTokenParameter$. A hash token stores the least significant $\symHashTokenParameter$ bits of the original hash value and, in addition, the \ac{NLZ} of the remaining $(64-\symHashTokenParameter)$ most significant bits of the hash value. If $\symHashTokenParameter\geq 1$ the \ac{NLZ} fits into 6 bits and a 64-bit hash value $\langle\symHashBit_{63} \symHashBit_{62} \ldots \symHashBit_{0}\rangle_2$ can be mapped to a $(\symHashTokenParameter+6)$-bit hash token $\symToken$ according to
\begin{equation*}
	\symToken = \langle \symHashBit_{\symHashTokenParameter-1}\dots \symHashBit_{0}000000\rangle_2 + \symNLZ(\langle\symHashBit_{63}\symHashBit_{62} \ldots\symHashBit_{\symHashTokenParameter}\!\underbracket[0.5pt][1pt]{11\ldots 1}_{\scriptscriptstyle\symHashTokenParameter}\rangle_2).
\end{equation*}

While in sparse mode, it is sufficient to keep only distinct hash tokens. When switching to dense mode, the hash tokens can be transformed back to representative 64-bit hash values following
\begin{equation*}
	\langle\symHashBit'_{63} \symHashBit'_{62} \ldots \symHashBit'_{0}\rangle_2
	=
	2^{64-\langle\symHashTokenBit_{5}\symHashTokenBit_{4}\symHashTokenBit_{3}\symHashTokenBit_{2}\symHashTokenBit_{1}\symHashTokenBit_{0}\rangle_2}
	- 2^{\symHashTokenParameter} + \langle\symHashTokenBit_{\symHashTokenParameter+5}\symHashTokenBit_{\symHashTokenParameter+4}\ldots \symHashTokenBit_{6}\rangle_2
\end{equation*}
where $\langle\symHashTokenBit_{\symHashTokenParameter+5}\symHashTokenBit_{\symHashTokenParameter+4}\ldots\symHashTokenBit_0\rangle_2$ is the binary representation of the token. The reconstructed hash values can be equivalently used for the insertion using \Cref{alg:insertion} as the original hash value.

It is also possible, to estimate the distinct count directly from a given set of distinct hash tokens $\symTokenSet$. Since the first $\symHashTokenParameter$ bits are uniformly distributed and the \ac{NLZ}, stored in the remaining 6 bits, are distributed according to a truncated geometric distribution with maximum value $64-\symHashTokenParameter$, the \acf{PMF} of hash tokens is given by
\begin{equation}
	\label{equ:pmf_hash_token}
	\symDensityToken(\symToken)
	=
	\begin{cases}
		\frac{1}{2^{\min(\symHashTokenParameter + 1 + (\symToken \bmod 64), 64)}}
		 &
		\symToken \bmod 64 \leq 64 - \symHashTokenParameter,
		\\
		0
		 &
		\text{else,}
	\end{cases}
\end{equation}
with $\symToken\in[0, 2^{\symHashTokenParameter+6})$ and $\symHashTokenParameter\geq 1$. As for any \ac{PMF}, summing up the probabilities for all possible values yields 1
\begin{equation}
	\label{equ:token_mass}
	\textstyle
	\sum_{\symToken=0}^{2^{\symHashTokenParameter+6}-1}\symDensityToken(\symToken) = 1.
\end{equation}

As in \Cref{sec:statistical_inference}, we use again the Poisson approximation, which allows to write the probability that some hash token $\symToken$ is in the set of collected hash tokens $\symTokenSet$ as
$\symProbability(\symToken\in\symTokenSet)=1-e^{-\symCardinality\symDensityToken(\symToken)}$.
Therefore, the log-likelihood function is
\begin{align}
	\ln\symLikelihood
	 & = \ln\symLikelihood(\symCardinality \vert \symTokenSet) \nonumber
	\\
	 &
	\textstyle
	=
	\sum_{\symToken\notin \symTokenSet} \ln(e^{-\symCardinality\symDensityToken(\symToken)})
	+
	\sum_{\symToken\in \symTokenSet} \ln(1
	-
	e^{-\symCardinality\symDensityToken(\symToken)})\nonumber
	\\
	 &
	\textstyle
	=
	-
	\symCardinality
	\sum_{\symToken\notin \symTokenSet} \symDensityToken(\symToken)
	+
	\sum_{\symToken\in \symTokenSet} \ln(
	1
	-
	e^{-\symCardinality\symDensityToken(\symToken)}
	)\nonumber
	\\
	 &
	\textstyle
	=
	-
	\symCardinality
	(1 - \sum_{\symToken\in \symTokenSet} \symDensityToken(\symToken))
	+
	\sum_{\symToken\in \symTokenSet} \ln(1
	-
	e^{-\symCardinality\symDensityToken(\symToken)})\nonumber
	\\
	\label{equ:token_ml}
	 &
	\textstyle
	=
	-\symCardinality \symLikelihoodFuncExponentOne + \sum_{\symMaxUpdateVal=\symHashTokenParameter+1}^{64}\symLikelihoodFuncExponentTwo_\symMaxUpdateVal \ln(1 - e^{-\frac{\symCardinality}{2^\symMaxUpdateVal}} )
\end{align}
where we used \eqref{equ:token_mass}. The coefficients $\symLikelihoodFuncExponentOne$ and $\symLikelihoodFuncExponentTwo_\symMaxUpdateVal$ can be obtained using \Cref{alg:token_coefficients}.
The log-likelihood function \eqref{equ:token_ml} has the same shape as that for the \ac{ELL} registers \eqref{equ:log_likelihood_shape} when setting $\symNumReg=1 \Leftrightarrow \symPrecision=0$ and $\symExtraBits=\symHashTokenParameter$. Therefore, the \ac{ML} estimate can be found again with the same root-finding algorithm described in \Cref{app:numerical-root-finding}.

\myAlg{
	\caption{Computation of the coefficients for the log-likelihood function \eqref{equ:token_ml} from a set $\symTokenSet$ of distinct $(\symHashTokenParameter+6)$-bit hash tokens.}
	\label{alg:token_coefficients}
	$\symLikelihoodFuncExponentOne'\gets 2^{64}$\Comment*[r]{start from 0 when using an unsigned 64-bit integer}
	$(\symLikelihoodFuncExponentTwo_{\symHashTokenParameter+1}, \symLikelihoodFuncExponentTwo_{\symExtraBits+2}, \ldots, \symLikelihoodFuncExponentTwo_{64})\gets (0, 0,\dots,0)$\;
	\For(\Comment*[f]{iterate over all collected distinct tokens}){$\symToken \in \symTokenSet$}{
		$\symIndexJ \gets \min(\symHashTokenParameter+1 + (\symToken \bmod 64), 64)$\;
		$\symLikelihoodFuncExponentTwo_\symIndexJ \gets \symLikelihoodFuncExponentTwo_\symIndexJ+ 1$\;
		$\symLikelihoodFuncExponentOne'\gets \symLikelihoodFuncExponentOne' - 2^{64-\symIndexJ}$\;
	}
	$\symLikelihoodFuncExponentOne\gets \symLikelihoodFuncExponentOne' / 2^{64}$;
}

\section{Experiments}
\label{sec:experiments}

\begin{figure*}[t]
	\centering
	\includegraphics[width=\linewidth]{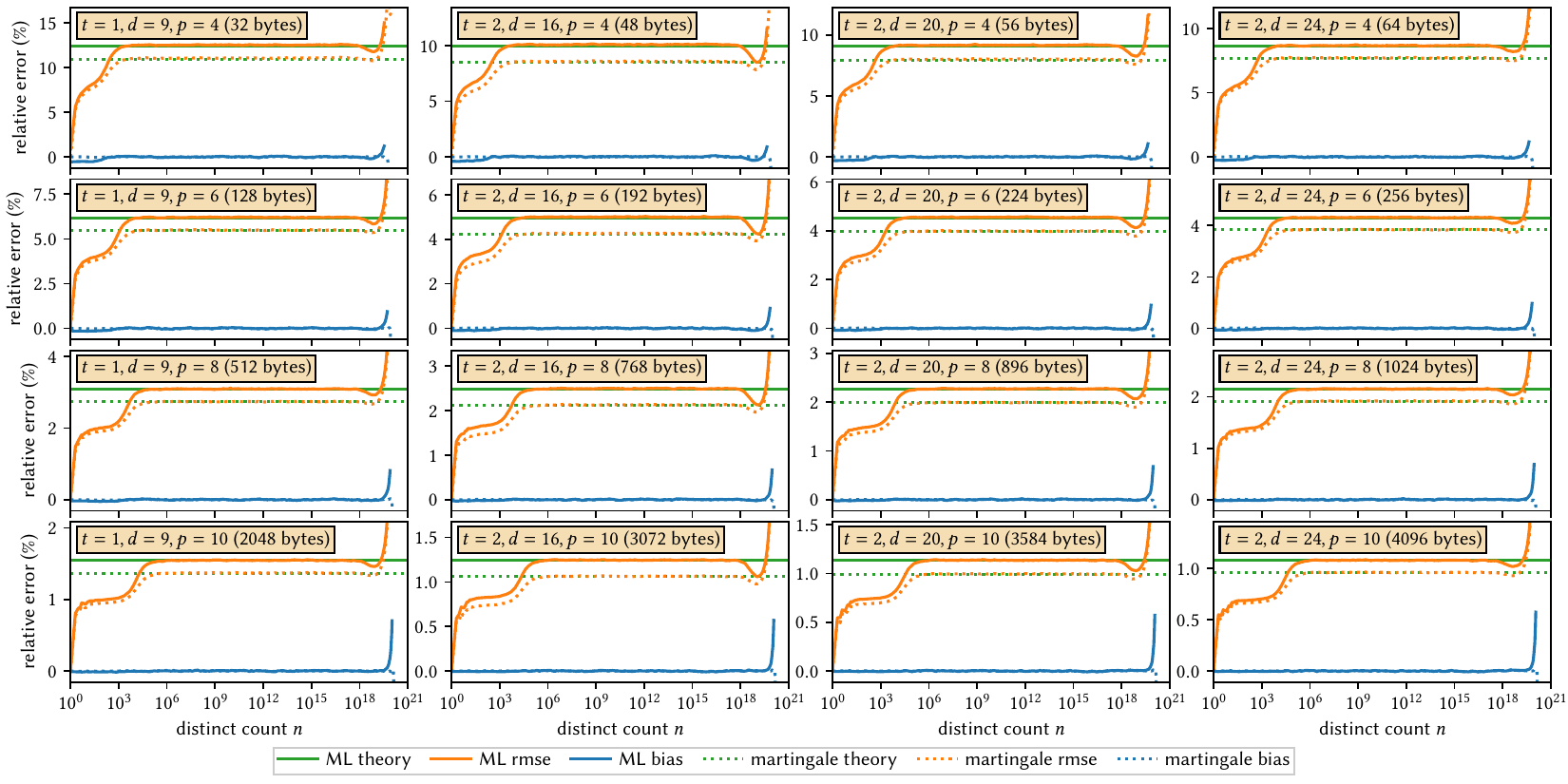}
	\caption{\boldmath The relative bias and the \acs*{RMSE} for the \acs*{ML} and the martingale estimator for different \acs*{ELL} configurations obtained from \num{100000} simulation runs. The theoretically predicted errors perfectly match the experimental results. Individual insertions were simulated up to a distinct count of \num{e6} before switching to the fast simulation strategy.}
	\label{fig:estimation-error}
\end{figure*}

\begin{figure}[t]
	\centering
	\includegraphics[width=\columnwidth]{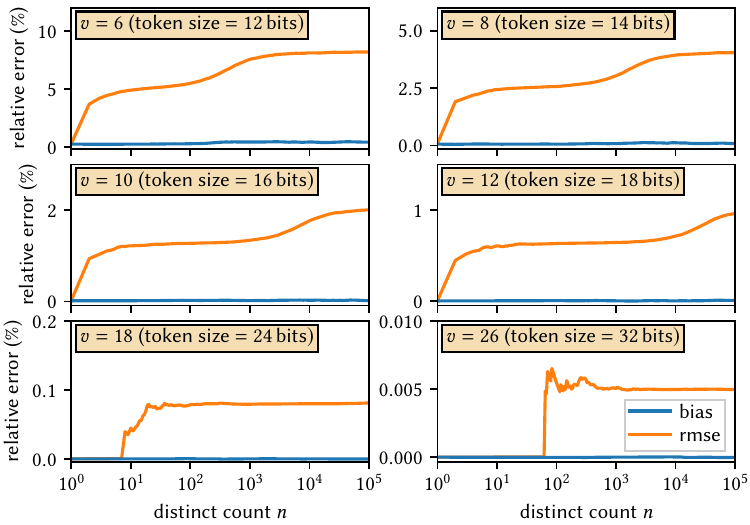}
	\caption{\boldmath The relative bias and the \acs*{RMSE} when estimating the distinct count from a set of collected distinct hash tokens with different sizes.}
	\label{fig:token-estimation-error}
\end{figure}

We provide a Java reference implementation of \acf{ELL} together with instructions and source code to reproduce all the presented results and figures at \url{https://github.com/dynatrace-research/exaloglog-paper}. The repository also includes numerous unit tests that cover \qty{100}{\percent} of the code. In particular, merging of \ac{ELL} sketches based on \Cref{alg:merge_register} was tested by creating many pairs of random \ac{ELL} sketches for which we compared the merged \ac{ELL} sketch with a sketch into which the unified stream of elements was inserted. Similarly, \Cref{alg:reduction} was tested by inserting the identical elements into two \ac{ELL} sketches with different configurations and checking whether the state was the same after reduction to the same parameters.

\subsection{Estimation Error}

We need to verify that the estimation error of \ac{ELL} really matches the theoretically predicted error, despite modeling \eqref{equ:step_dist} by \eqref{equ:geometric} and the number of distinct element insertions by a Poisson distribution. According to \eqref{equ:mvp_def} we would expect a \ac{RMSE} error of $\sqrt{\symMVP / ((\symBitsForMax + \symNumExtraBits)\symNumReg )}$ with $\symBitsForMax=6+\symExtraBits$, $\symBase=2^{2^{-\symExtraBits}}$, and the $\symMVP$ either given by \eqref{equ:mvp_uncompressed} or \eqref{equ:mvp_martingale}, depending on whether the \ac{ML} or the martingale estimator is used.

An accurate evaluation of the error requires thousands of estimates of different data sets with identical true distinct count $\symCardinality$. It is infeasible to use real data sets, if we want to repeat that for many different and also large $\symCardinality$. Therefore, we use a more efficient approach \cite{Ertl2024}. Extensive empirical tests \cite{Urban} have shown that the output of modern hash functions such as WyHash \cite{Yi}, Komihash \cite{Vaneev}, or PolymurHash \cite{Peters} can be considered like uniform random values. Otherwise, field-tested probabilistic data structures like \ac{HLL} would not work. This fact allows us to perform the experiments without real or artificially generated data.

Insertion of a new element can be simulated by simply generating a 64-bit random value to be used directly as the hash value of the inserted element in \Cref{alg:insertion}. Duplicate insertions of the same element can be ignored as they cannot change the state of \ac{ELL} by definition. Processing a random data set with true distinct count $\symCardinality$ is thus equivalent to using $\symCardinality$ random values instead. Accidental collisions of random values can be ignored because they occur with the same probability as hash collisions for real data.
To simulate the estimation error for a predefined distinct count value $\symCardinality$, the estimate is computed after updating the \ac{ELL} sketch using \Cref{alg:insertion} with $\symCardinality$ random values and finally compared against the true distinct count $\symCardinality$. By repeating this process with many different random sequences, in our experiments \num{100000}, the bias and the \ac{RMSE} can be empirically determined.

As this approach becomes computationally infeasible for distinct counts beyond 1 million, we switch to a different strategy \cite{Ertl2024}. After the first million insertions, for which a random value was generated each time, we just generate the waiting time (the number of distinct count increments) until a register is processed with a certain update value the next time. The probability that a register is updated with any possible update value $\symUpdateVal\in[1, (65 - \symPrecision - \symExtraBits)2^\symExtraBits]$ is given by $\symDensityUpdate(\symUpdateVal)/\symNumReg$ with $\symDensityUpdate(\symUpdateVal)$ from \eqref{equ:update_density}.
Therefore, the number of distinct count increments until a register is updated with a specific value $\symUpdateVal$ the next time is geometrically distributed with corresponding success probability.
In this way, we determine the next update time for each register and for each possible update value. Since the same update value can only modify a register once, we do not need to consider further updates which might occur with the same value for the same register. Knowing these $\symNumReg \times ((65 - \symPrecision - \symExtraBits)2^\symExtraBits)$ distinct counts leading to possible state changes in advance, enables us to make large distinct count increments, resulting in a huge speedup. This eventually allowed us to simulate the estimation error for distinct counts up to values of \num{e21} and also to test the presented estimators up to the exa-scale.

\Cref{fig:estimation-error} shows the empirical relative bias and \ac{RMSE} together with the theoretical \ac{RMSE} given by $\sqrt{\symMVP / ((\symBitsForMax + \symNumExtraBits)\symNumReg )}$ for the \ac{ML} and the martingale estimator for configurations $(\symExtraBits, \symNumExtraBits)\in\lbrace(1,9),(2,16),(2,20),(2,24) \rbrace$ and precisions $\symPrecision\in\lbrace 4, 6, 8, 10\rbrace$.
For intermediate distinct counts, perfect agreement with theory is observed.
For small distinct counts the estimation error is even much smaller. Interestingly, the estimation error also decreases slightly at the end of the operating range, which is in the order of $2^{64} \approx \num{1.9e19}$ and thus lies in the exa-scale range.
The estimators are essentially unbiased. The tiny bias which appears for small $\symPrecision$ for the \ac{ML} estimator can be ignored in practice as it is much smaller than the theoretical \ac{RMSE}.

We also verified estimation from sets of hash tokens as proposed in \Cref{sec:sparse_mode}. Again, we performed \num{100000} simulation runs, where we simulated 64-bit hash values by taking 64-bit random values and transforming them into corresponding hash tokens using different parameters $\symHashTokenParameter\in\lbrace 6,8,10,12,18,26\rbrace$. We considered distinct counts up to \num{e5}, which is typically far beyond the break-even point where a transition to the dense representation takes place. As shown in \Cref{fig:token-estimation-error}, the estimates are unbiased, and the relative estimation error is slightly smaller than the estimation error of an \ac{ELL} sketch for which $\symPrecision+\symExtraBits = \symHashTokenParameter$. The reason is that the set of hash tokens contains information that is equivalent to an \ac{ELL} sketch with $\symNumExtraBits\rightarrow\infty$. From a practical perspective, a hash token size of 4 bytes ($\symHashTokenParameter=26$) is particularly interesting, because it is big enough to support any practical \ac{ELL} configurations. Furthermore, as the tokens can be stored in a plain 32-bit integer array, off-the-shelve sorting algorithms can be used for deduplication.

\subsection{Space Efficiency Comparison}

\begin{table*}[t]
	\begin{threeparttable}
		\caption{\boldmath Comparison of mergeable approximate distinct counting algorithms when estimating the distinct count after inserting $\symCardinality = \num{e6}$ distinct elements. The parameters were chosen to obtain roughly \qty{2}{\percent} \acf*{RMSE}. The actual \acs*{RMSE} was empirically determined from 1 million simulation runs. The \acf*{MVP} is estimated as $\symMVP=(\text{average memory/serialization size in bits})\times (\text{\acs*{RMSE}})^2$ and is a fair measure of the space efficiency. The table is sorted by the in-memory \acs*{MVP}.}
		\label{tab:comparison}

		\scriptsize
		\begin{tabular*}{\linewidth}{@{\extracolsep{\fill}} llllrrrrr}
			\toprule
			&
			&
			&
			&
			\multicolumn{2}{c}{Size in bytes}
			&
			\multicolumn{2}{c}{\acs*{MVP} (space efficiency)}
			&
			\multirow[b]{2}{*}{\makecell{Constant-time\\insert operation}}
			\\
			\cmidrule(lr){5-6}
			\cmidrule(lr){7-8}
			Algorithm
			&
			References
			&
			Source code (https://github.com/...)
			&
			\acs*{RMSE}
			&
			memory
			&
			serialized
			&
			memory
			&
			serialized
			&
			\\
			\midrule
			\acl*{HLL} (\acs*{HLL}, 8-bit registers, $\symPrecision=11$) & \cite{ApacheDataSketches} & \href{https://github.com/apache/datasketches-java}{apache/datasketches-java} & \qty{2.29}{\percent} & \num{2296} & \num{2088} & \num{9.66} & \num{8.78}& \checkmark \\
			\acl*{HLL} (\acs*{HLL}, 6-bit registers, $\symPrecision=11$) & \cite{ApacheDataSketches,Heule2013} & \href{https://github.com/apache/datasketches-java}{apache/datasketches-java} & \qty{2.29}{\percent} & \num{1792} & \num{1577} & \num{7.54} & \num{6.63} & \checkmark \\
			\acl*{HLL} (\acs*{HLL}, \acs*{ML} estimator, $\symPrecision=11$) & \cite{Ertl2017} & \href{https://github.com/dynatrace-oss/hash4j}{dynatrace-oss/hash4j} & \qty{2.29}{\percent} & \num{1576} & \num{1536} & \num{6.63} & \num{6.46} & \checkmark \\
			\acl*{HLL} (\acs*{HLL}, 4-bit registers, $\symPrecision=11$) & \cite{ApacheDataSketches} & \href{https://github.com/apache/datasketches-java}{apache/datasketches-java} & \qty{2.29}{\percent} & \num[separate-uncertainty=true]{1331\pm 56} & \num[separate-uncertainty=true]{1067\pm 4} & \num{5.60} & \num{4.49} & -- \\
			\Acl*{CPC} (\acs*{CPC}, $\symPrecision=10$) & \cite{ApacheDataSketches,Lang2017} & \href{https://github.com/apache/datasketches-java}{apache/datasketches-java} & \qty{2.16}{\percent} & \num[separate-uncertainty=true]{1416\pm 34} & \num[separate-uncertainty=true]{656\pm 11} & \num{5.30} & \num{2.46}\TPTrlap{\tnote{\textasteriskcentered}} & -- \\
			\acl*{ULL} (\acs*{ULL}, \acs*{ML} estimator, $\symPrecision=10$) & \cite{Ertl2024} & \href{https://github.com/dynatrace-oss/hash4j}{dynatrace-oss/hash4j} & \qty{2.38}{\percent} & \num{1056} & \num{1024} & \num{4.78} & \num{4.64} & \checkmark \\
			\acl*{HLLL} (\acs*{HLLL}, $\symPrecision=11$) & \cite{Karppa2022} & \href{https://github.com/mkarppa/hyperlogloglog}{mkarppa/hyperlogloglog} & \qty{2.30}{\percent} & \num[separate-uncertainty=true]{1100\pm 13} & \num[separate-uncertainty=true]{898\pm 16} & \num{4.64} & \num{3.79} & -- \\
			SpikeSketch (128 buckets) & \cite{Du2023} & \href{https://github.com/duyang92/SpikeSketch}{duyang92/SpikeSketch} & \qty{2.26}{\percent}\TPTrlap{\tnote{\textasteriskcentered\textasteriskcentered}} & $\geq\num{1024}$\TPTrlap{\tnote{\textasteriskcentered\textasteriskcentered\textasteriskcentered}} & $\geq\num{1024}$\TPTrlap{\tnote{\textasteriskcentered\textasteriskcentered\textasteriskcentered}} & $\geq\num{4.19}$\TPTrlap{\tnote{\textasteriskcentered\textasteriskcentered\textasteriskcentered}} & $\geq\num{4.19}$\TPTrlap{\tnote{\textasteriskcentered\textasteriskcentered\textasteriskcentered}} & \checkmark \\
			\acl*{ELL} (\acs*{ELL}, $\symExtraBits=2$, $\symNumExtraBits=24$, $\symPrecision=8$) & this work & \href{https://github.com/dynatrace-research/exaloglog-paper}{dynatrace-research/exaloglog-paper} & \qty{2.15}{\percent} & \num{1064} & \num{1024} & \num{3.93} & \num{3.79} & \checkmark \\
			\acl*{ELL} (\acs*{ELL}, $\symExtraBits=2$, $\symNumExtraBits=20$, $\symPrecision=8$) & this work & \href{https://github.com/dynatrace-research/exaloglog-paper}{dynatrace-research/exaloglog-paper} & \qty{2.27}{\percent} & \num{936} & \num{896} & \num{3.86} & \num{3.69} & \checkmark \\
			\midrule
			Conjectured lower bound & \cite{Pettie2021} & -- & -- & -- & -- & \num{1.98} & \num{1.98} & not known \\
			\bottomrule
		\end{tabular*}
		\begin{tablenotes}
			\item [\textasteriskcentered] achieved by expensive compression during serialization
			\item [\textasteriskcentered\textasteriskcentered] error can be much larger for smaller distinct counts
			\item [\textasteriskcentered\textasteriskcentered\textasteriskcentered] lower bound values based on theoretical considerations completely ignoring auxiliary data fields (empirical values are meaningless as the reference implementation is not optimized)
		\end{tablenotes}
	\end{threeparttable}
\end{table*}

Our experiments have shown that the estimation error matches the theoretical predicted estimation error. Since the space requirement of \ac{ELL} is constant $(\symBitsForMax + \symNumExtraBits)\symNumReg$ bits, the theoretically predicted \ac{MVP}, as discussed in \Cref{sec:choice_param}, can be achieved if memory overhead for the Java object or auxiliary fields can be ignored. However, for a fairer comparison with other practical algorithms, we considered the empirical \ac{MVP} based on the total space allocated by the whole data structure.

We performed 1 million simulation runs. In each cycle, the distinct count was estimated and the allocated amount of memory as well as the serialization size were measured after adding up to \num{e6} distinct random elements. This allowed us to compute the \ac{RMSE} and, together with the average space requirements, the empirical \acp{MVP} according to \eqref{equ:mvp_def}.

\Cref{tab:comparison} compares our \ac{ELL} reference implementation to other state-of-the-art algorithms. All algorithms were configured to give roughly \qty{2}{\percent} estimation error. Since the reference implementation of SpikeSketch \cite{Du2023} is not very space-efficient and also does not support serialization, we used the size of the plain register array without any additional overheads as lower bound size estimates. For algorithms that allocate variable space, such as \ac{HLL} with 4-bit registers, \ac{CPC}, or \ac{HLLL}, the standard deviation of the size is also shown.

The serialization size is always smaller than the in-memory size to which object overhead or auxiliary fields such as buffers also contribute. The difference is particularly large for \ac{CPC}, whose serialization method also applies a specialized and relatively expensive (see \Cref{sec:performance}) compression step \cite{Lang2017}. A fair comparison would require the development of specific compression techniques for all other data structures which is out of the scope of this work. However, the theoretical \acp{MVP} for optimal compression shown in \Cref{fig:mvp_ml_compressed} indicate that the size of \ac{ELL} could be further reduced. For \ac{ULL}, which is a special case of \ac{ELL}, we have already shown that \acp{MVP} below 3 can be achieved. Its 1-byte register array seems to be very convenient for standard compression algorithms \cite{Ertl2024}.

\Cref{fig:memory} also shows the memory consumption and the corresponding \ac{MVP} for other distinct counts. \ac{ELL} requires constant space and never allocates additional data. The data structures from the Apache DataSketches library have implemented a sparse mode that allows them to be more space-efficient for small distinct counts. However, a sparse mode could also be easily implemented for \ac{ELL} as discussed in \Cref{sec:sparse_mode}.

The \ac{MVP} of SpikeSketch stands out at lower distinct counts. The high values are a result of the lossy compression and stepwise smoothing. The latter reduces the update probability even of empty SpikeSketches by a factor of \qty{64}{\percent}. As a consequence, the estimation error is \qty{100}{\percent} with a \qty{36}{\percent} probability independent of the number of buckets for the extreme case of $\symCardinality=1$. We, therefore, do not consider SpikeSketch to be suitable for practical use. \ac{HLLL} also shows a spike around $\symCardinality=\num{5e3}$, which is a result of using the original \ac{HLL} estimator \cite{Flajolet2007} that is known to have some issues \cite{Heule2013,Ertl2017}.

\subsection{Performance Comparison}
\label{sec:performance}

To compare the performance of \ac{ELL} with the other algorithms listed in \Cref{tab:comparison}, we used an Amazon EC2 c5.metal instance running Ubuntu Server 24.04 LTS. Turbo Boost was disabled by setting the processor P-state to 1 \cite{AmazonPState}. We used Java implementations for all algorithms except for \ac{HLLL} and SpikeSketch whose reference implementations are written in C++.

\Cref{fig:benchmarks} shows the results of our benchmarks. First, element insertion was tested by adding up to \num{e6} random 16-byte arrays that were generated and stored in memory in advance.
As Apache DataSketches uses the 128-bit version of Murmur3 as built-in hash function without the flexibility of defining a different one, we used it also for all other algorithms to make a fair comparison. The corresponding graph shows the average time per inserted element, which also includes the initial allocation of the data structure. For this reason, the measured times for small $\symCardinality$ tend to be higher. All insertion times, except those for \ac{HLLL} and SpikeSketch are between \num{20} and \qty{50}{\nano\second}. For \ac{ELL}, we investigated insertion with and without martingale estimator.

The fastest estimation times are achieved by algorithms, including those from Apache DataSketches, that maintain a martingale estimator or keep track of other redundant statistics during insertion. The \ac{ELL} \ac{ML} estimator compares well to those algorithms that insert elements without such additional bookkeeping.

To analyze serialization, we measured the time to write the state into a newly allocated byte array. For \ac{ELL}, this means just copying the byte array holding the register values, which is very fast. The results also show that the serialization of \ac{CPC} is more than an order of magnitude slower due to the expensive compression, as discussed before. We do not have data for \ac{HLLL} and SpikeSketch as their reference implementations come without a serialization method.

Finally, we also investigated the time to merge two data structures, both filled with $\symCardinality$ random elements. The results show that \ac{ELL} is very fast. Algorithms that have implemented a sparse mode are, as expected, faster for small $\symCardinality$ as less data has to be processed. The comparison with the algorithms from Apache DataSketches is not entirely fair, as they rebuild internal statistics for estimation during the merging process. For this reason, we have also considered merging followed by estimation, which is a common operation sequence in practice. \ac{ELL} also performs quite well in this case. We have no data for SpikeSketch as its reference implementation does not include a working merge operation.

It is worth noting that our \ac{ELL} reference implementation is generic and supports arbitrary values of $\symExtraBits$ and $\symNumExtraBits$. Hardcoding these values could potentially further improve its performance.

\section{Future Work}
A topic for future research is the compressibility of \ac{ELL}. According to \Cref{fig:mvp_ml_compressed,fig:mvp_martingale_compressed}, much lower \acp{MVP} could be achieved with optimal compression. For \ac{ULL}, a special case of \ac{ELL} with $\symExtraBits=0$ and $\symNumExtraBits=2$ (compare \Cref{sec:related_data_structures}), a reduction close to the theoretical limit can be achieved with standard compression algorithms \cite{Ertl2024} such as Zstandard \cite{Collet2018}. Unlike \ac{ULL}, whose register size is exactly one byte, we assume that standard algorithms will work worse in the general case. Since the shape of the register distribution is known (see \Cref{sec:pmf_reg}), some sort of entropy coding could be a way to approach the theoretical limit.

As discussed in \Cref{sec:related_data_structures}, HyperMinHash is a special case of \ac{ELL}, and \ac{PCSA} and CPC contain the same information as an \ac{ELL}(0, 64) sketch.
Therefore, our proposed \ac{ML} estimation approach, in which the \ac{ML} equation is first reduced to the simple form \eqref{equ:log_likelihood_shape} with a relatively small number of terms, should also work for them. We assume that this method could lead to slightly lower estimation errors than current approaches, as \ac{ML} estimation is generally known to be asymptotically efficient.

\begin{figure}[t]
	\centering
	\includegraphics[width=1\columnwidth]{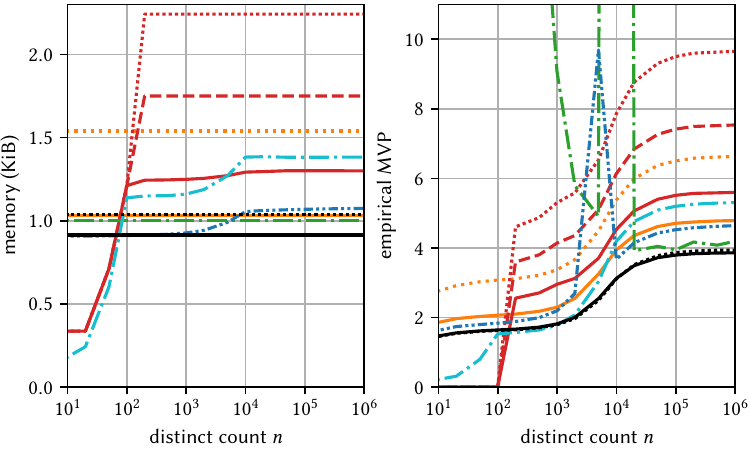}
	\caption{\boldmath The average memory footprint and the empirical \acs*{MVP} for $\symCardinality\in\lbrace 10,20,50,100,200,500,\ldots,\num{e6}\rbrace$ obtained from 1 million simulation runs. The legend is given in \Cref{fig:benchmarks}.}
	\label{fig:memory}
\end{figure}

\begin{figure}[t]
	\centering
	\includegraphics[width=1\columnwidth]{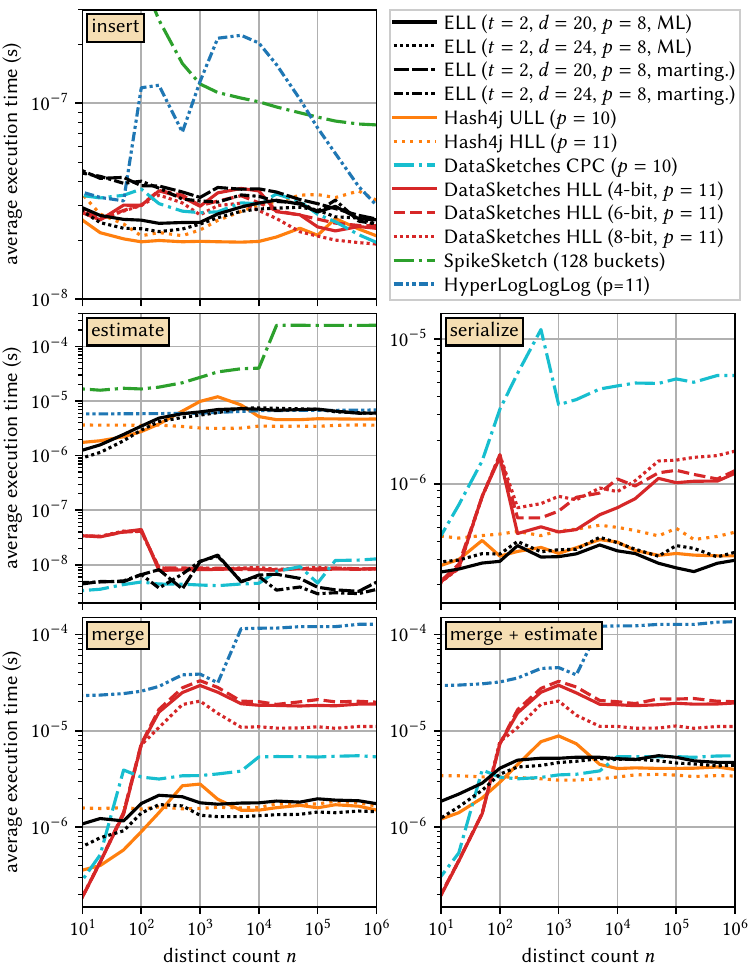}
	\caption{\boldmath The average execution time for insert, estimate, serialize, merge, and combined merge and estimate operations for $\symCardinality\in\lbrace 10,20,50,100,200,500,\ldots,\num{e6}\rbrace$.}
	\label{fig:benchmarks}
\end{figure}

\section{Conclusion}
We have introduced a new algorithm for distinct counting called \acf{ELL}, which includes already-known algorithms as special cases. With the right parameters, the space efficiency can be improved significantly. In particular, a configuration was presented that reduces the \ac{MVP} by \qty{43}{\percent} compared to the widely used HyperLogLog algorithm. \ac{ELL} also supports practical properties such as mergeability, idempotency, reproducibility, and reducibility. We have also shown that \acf{ML} estimation using the developed robust numerical solver and martingale estimation are feasible. The observed estimation errors are fully consistent with the theory. In contrast to other recent approaches, insertions always require constant time, regardless of the chosen accuracy. We also proposed a sparse mode based on hash tokens, which allows the allocation of the register array to be postponed. All this makes \ac{ELL} very attractive for wider use in practice.

\appendix

\section{Numerical Root-Finding}
\label{app:numerical-root-finding}

As the function $\symFunc$ defined in \eqref{equ:ml_func} is strictly increasing and concave (see \Cref{lem:inc_and_concave}), the root of the \ac{ML} equation $\symFunc(\symX)=0$ can be robustly found using Newton's method when starting from a point $\symX_{0}$ satisfying $\symFunc(\symX_{0})\leq 0$. According to \Cref{lem:inequality} such a point is
\begin{multline}
	\label{equ:starting_point}
	\textstyle
	\symX_{0}
	=
	\exp(
	\ln(
		1
		+
		\frac{
			\symLikelihoodFuncExponentTwoPow
		}
		{
			\symLikelihoodFuncExponentOne 2^{\symMaxUpdateValMax}
		}
		)
	\frac
		{
			\symLikelihoodFuncExponentTwoSum
		}
		{
			\symLikelihoodFuncExponentTwoPow	}
	)
	-1
	\\
	\text{with $\textstyle\symLikelihoodFuncExponentTwoSum:=\sum_{\symIndexJ=\symMaxUpdateValMin}^{\symMaxUpdateValMax}
			\symLikelihoodFuncExponentTwo_{\symIndexJ}$ and $\textstyle
			\symLikelihoodFuncExponentTwoPow:=\sum_{\symIndexJ=\symMaxUpdateValMin}^{\symMaxUpdateValMax}
			\symLikelihoodFuncExponentTwo_{\symIndexJ}
			2^{\symMaxUpdateValMax - \symIndexJ}$}.
\end{multline}
As a result, the sequence of points obtained by Newton's method
\begin{equation*}
	\textstyle
	\symX_{\symTimeIndex+1}
	=
	\symX_{\symTimeIndex}
	-
	\frac{\symFunc(\symX_{\symTimeIndex})}{\symFunc'(\symX_{\symTimeIndex})}
	=
	\symX_{\symTimeIndex}
	\left(
	1
	-
	\frac{\symFunc(\symX_{\symTimeIndex})}{\symX_{\symTimeIndex}\symFunc'(\symX_{\symTimeIndex})}
	\right)
\end{equation*}
is always increasing and approaches $\symXZero$. The recursion can be transformed using
\begin{multline*}
	\textstyle
	\symX \symFunc'(\symX)-\symLikelihoodFuncExponentOne 2^{\symMaxUpdateValMax}\symX
	=
	-\symX\frac{\partial}{\partial \symX}\sum_{\symIndexJ=1}^{\symMaxUpdateValMax - \symMaxUpdateValMin}\symLikelihoodFuncExponentTwo_{\symMaxUpdateValMax - \symIndexJ}
	\frac{ 2^\symIndexJ \symX}{(1+\symX)^{2^\symIndexJ}-1}
	\\
	\begin{aligned}
		 & =
		\textstyle
		\sum_{\symIndexJ=1}^{\symMaxUpdateValMax - \symMaxUpdateValMin}\symLikelihoodFuncExponentTwo_{\symMaxUpdateValMax - \symIndexJ}
		\left(
		\frac{4^\symIndexJ \symX^2 (1 + \symX)^{2^\symIndexJ-1} }{((1 + \symX)^{2^\symIndexJ}-1)^2}
		-
		\frac{ 2^\symIndexJ \symX}{(1+\symX)^{2^\symIndexJ}-1}
		\right)
		\\
		 & =
		\textstyle
		\sum_{\symIndexJ=1}^{\symMaxUpdateValMax - \symMaxUpdateValMin}\symLikelihoodFuncExponentTwo_{\symMaxUpdateValMax - \symIndexJ}
		\frac{ 2^\symIndexJ \symX}{(1+\symX)^{2^\symIndexJ}-1}
		\left(
		\frac{2^\symIndexJ \symX (1 + \symX)^{2^\symIndexJ-1} }{(1 + \symX)^{2^\symIndexJ}-1}
		-
		1
		\right)
		=
		\symSumB(\symX)
	\end{aligned}
\end{multline*}
with
\begin{multline}
	\label{equ:sumb}
	\textstyle
	\symSumB(\symX)
	:=
	\sum_{\symIndexJ=1}^{\symMaxUpdateValMax - \symMaxUpdateValMin}
	\symLikelihoodFuncExponentTwo_{\symMaxUpdateValMax - \symIndexJ}
	\left(
	\prod_{\symIndexL=0}^{\symIndexJ-1}
	\frac{ 2}{(1+\symX)^{2^\symIndexL}+1}
	\right)
	\cdot
	\\
	\textstyle
	\cdot
	\left(
	\left(
		\prod_{\symIndexL = 0}^{\symIndexJ-1}
		\frac{2 (1 + \symX)^{2^\symIndexL} }{(1 + \symX)^{2^\symIndexL}+1}
		\right)
	-1
	\right)
\end{multline}
into
\begin{equation}
	\label{equ:newton_iteration}
	\textstyle
	\symX_{\symTimeIndex+1}
	=
	\symX_{\symTimeIndex}
	\left(
	1
	+
	\frac{
		\symSumA(\symX_{\symTimeIndex}) -\symLikelihoodFuncExponentOne 2^{\symMaxUpdateValMax}\symX_{\symTimeIndex}
	}{
		\symSumB(\symX_{\symTimeIndex})
		+
		\symLikelihoodFuncExponentOne 2^{\symMaxUpdateValMax}\symX_{\symTimeIndex}
	}
	\right)
	.
\end{equation}
Function $\symSumB$ shares the same product as $\symSumA$ \eqref{equ:def_suma}, which therefore needs to be computed only once. The last factor in \eqref{equ:sumb} is always nonnegative as the product consists of factors that are all greater than or equal to 1, because
$1\leq \frac{2 (1 + \symX)^{2^\symIndexL} }{(1 + \symX)^{2^\symIndexL}+1} < 2$.
Furthermore, when using standard floating-point types, overflows will not occur as the product has at most $\symMaxUpdateValMax - \symMaxUpdateValMin\leq 63-\symPrecision-\symExtraBits$ factors and all of them are smaller than 2.
To reduce numerical errors and minimize the number of operations, we compute
\begin{equation*}
	\textstyle
	\symFactorTwo_\symIndexJ:=
	\left(
	\prod_{\symIndexL = 0}^{\symIndexJ-1}
	\frac{2 (1 + \symX)^{2^\symIndexL} }{(1 + \symX)^{2^\symIndexL}+1}
	\right)
	-1
\end{equation*}
recursively using \eqref{equ:y_def} according to
\begin{equation}
	\label{equ:factor_two_def}
	\textstyle
	\symFactorTwo_{\symIndexJ+1}
	=
	\symFactorTwo_{\symIndexJ} (2 - \frac{2}{2+\symY_{\symIndexJ}}) + (1- \frac{2}{2+\symY_{\symIndexJ}}).
\end{equation}

The Newton iteration \eqref{equ:newton_iteration} can be stopped, if $\symFunc(\symX_{\symTimeIndex})\geq 0$, equivalent to $\symSumA(\symX_{\symTimeIndex}) \leq \symLikelihoodFuncExponentOne 2^{\symMaxUpdateValMax}\symX_{\symTimeIndex}$, because we expect the sequence $\symX_{\symTimeIndex}$ to be increasing and converging towards the root. The case $\symSumA(\symX_{\symTimeIndex}) < \symLikelihoodFuncExponentOne 2^{\symMaxUpdateValMax}\symX_{\symTimeIndex}$ may happen due to numerical errors. It is reasonable to stop the Newton iteration also in this case, as the numerical error limits have been reached. Similarly, we stop the iteration, if $\symX_{\symTimeIndex+1}\leq \symX_{\symTimeIndex}$.
In practice, only a small number of iterations is needed to satisfy any of the two stop conditions. In all our experiments presented in \Cref{sec:experiments}, the number of iterations never exceeded 10 when calculating the estimate from \ac{ELL} sketches. On average, we observed between 5 and 7 iterations, dependent on the \ac{ELL} parameters and the true distinct count.
The whole procedure to compute the \ac{ML} estimate using Newton's method is summarized by \Cref{alg:numerical_maximizer}.

\myAlg{
	\caption{Numerical computation of the distinct count estimate by solving the \acs*{ML} equation using Newton's method. $\symLikelihoodFuncExponentOne$ and $\symLikelihoodFuncExponentTwo_{\symExtraBits + 1}, \ldots, \symLikelihoodFuncExponentTwo_{64-\symPrecision}$ are the coefficients of the log-likelihood function \eqref{equ:log_likelihood_shape}.}
	\label{alg:numerical_maximizer}
	$\symLikelihoodFuncExponentTwoSum \gets 0 $, $\symLikelihoodFuncExponentTwoPow \gets 0 $, $\symMaxUpdateValMin\gets -1$, $\symMaxUpdateValMax\gets 0$\;
	\For{$\symIndexJ \gets \symExtraBits + 1\ \KwTo \ 64 - \symPrecision$} {
		\If{$\symLikelihoodFuncExponentTwo_\symIndexJ > 0$}{
			\lIf{$\symMaxUpdateValMin < 0$}{$\symMaxUpdateValMin \gets \symIndexJ$}
			$\symMaxUpdateValMax \gets \symIndexJ$\;
			$\symLikelihoodFuncExponentTwoSum \gets\symLikelihoodFuncExponentTwoSum + \symLikelihoodFuncExponentTwo_\symIndexJ$, $\symLikelihoodFuncExponentTwoPow \gets\symLikelihoodFuncExponentTwoPow + \symLikelihoodFuncExponentTwo_\symIndexJ\cdot 2^{-\symIndexJ}$\Comment*[r]{see \eqref{equ:starting_point}}
		}
	}
	\lIf(\Comment*[f]{all $\symLikelihoodFuncExponentTwo_\symIndexJ$ are zero}){$\symMaxUpdateValMin < 0$}{\KwRet $0$}
	$\symLikelihoodFuncExponentTwoPow \gets \symLikelihoodFuncExponentTwoPow \cdot 2^\symMaxUpdateValMax$\;
	$\symX\gets \symLikelihoodFuncExponentTwoPow/(\symLikelihoodFuncExponentOne\cdot 2^{\symMaxUpdateValMax})$\;
	\If(\Comment*[f]{$\symX$ is already the root of $\symFunc$, if $\symMaxUpdateValMin = \symMaxUpdateValMax$}){$\symMaxUpdateValMin < \symMaxUpdateValMax$} {
		$\symX\gets \symExpMinusOne(\symLogPlusOne(\symX)
			\cdot (\symLikelihoodFuncExponentTwoSum/\symLikelihoodFuncExponentTwoPow
				))
		$\Comment*[r]{starting point, see \eqref{equ:starting_point}}
		\Loop(\Comment*[f]{main loop of Newton iteration}){} {
			$\symFactorOne\gets 1$, $\symFactorTwo\gets 0$, $\symY\gets\symX$, $\symMaxUpdateVal\gets\symMaxUpdateValMax$\;
			$\symSumA \gets \symLikelihoodFuncExponentTwo_\symMaxUpdateVal$, $\symSumB \gets 0$\;
			\Loop(\Comment*[f]{loop for summing up $\symSumA$ \eqref{equ:def_suma} and $\symSumB$ \eqref{equ:sumb}}){}{
				$\symMaxUpdateVal\gets\symMaxUpdateVal-1$\;
				$\symZ \gets 2/(2+\symY)$\Comment*[r]{$\symZ\in[0,1]$}
				$\symFactorOne\gets \symFactorOne \cdot \symZ$\Comment*[r]{$\symFactorOne$ is decreasing, compare \eqref{equ:factor_one_def}}
				$\symFactorTwo\gets \symFactorTwo \cdot (2 - \symZ) + (1 - \symZ) $\Comment*[r]{$\symFactorTwo$ is increasing, compare \eqref{equ:factor_two_def}}
				$\symSumA \gets \symSumA + \symLikelihoodFuncExponentTwo_\symMaxUpdateVal\cdot \symFactorOne$,
				$\symSumB \gets \symSumB + \symLikelihoodFuncExponentTwo_\symMaxUpdateVal\cdot \symFactorOne\cdot \symFactorTwo$\;
				\lIf{$\symMaxUpdateVal \leq \symMaxUpdateValMin$}{\Break}
				$\symY\gets\symY\cdot(\symY+2)$\Comment*[r]{compare \eqref{equ:y_recursion}}
			}
			$\symX'\gets(\symLikelihoodFuncExponentOne\cdot 2^{\symMaxUpdateValMax})\cdot \symX$\;
			\lIf(\Comment*[f]{stop iteration if $\symFunc(\symX) \geq 0$, see \eqref{equ:ml_func}}){$\symSumA \leq \symX'$}{\Break}
			$\symXOld\gets\symX$\;
			$\symX\gets\symX\cdot (1 + (\symSumA -\symX') / (\symSumB + \symX'))$\Comment*[r]{compare \eqref{equ:newton_iteration}}
			\lIf(\Comment*[f]{stop if numerically converged}){$\symX \leq \symXOld$}{\Break}
		}
	}
	\Return $\symNumReg \cdot 2^{\symMaxUpdateValMax}\cdot\symLogPlusOne(\symX)$ \Comment*[r]{compare \eqref{equ:final_estimate}}
}

\section{Proofs}
\label{app:proofs}

\begin{lemma}
	\label{lem:identity}
	For {$\symDensityUpdate$} and $\symExpFunc$ as defined in \eqref{equ:update_density} and \eqref{equ:exponent_func} $
		\sum_{\symUpdateVal = \symMaxUpdateVal+1}^{(65 - \symPrecision - \symExtraBits)2^\symExtraBits}\symDensityUpdate(\symUpdateVal)
		=
		\frac{2^{\symExtraBits}(1-\symExtraBits +\symExpFunc(\symMaxUpdateVal)) - \symMaxUpdateVal }{2^{\symExpFunc(\symMaxUpdateVal)}}
	$ holds.
\end{lemma}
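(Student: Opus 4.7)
My plan is to compute the complementary partial sum $T := \sum_{\symUpdateVal=1}^{\symMaxUpdateVal} \symDensityUpdate(\symUpdateVal)$ in closed form and then recover $S := \sum_{\symUpdateVal=\symMaxUpdateVal+1}^{(65-\symPrecision-\symExtraBits)2^\symExtraBits}\symDensityUpdate(\symUpdateVal) = 1 - T$ using the normalization $\sum_{\symUpdateVal=1}^{(65-\symPrecision-\symExtraBits)2^\symExtraBits}\symDensityUpdate(\symUpdateVal) = 1$. This normalization itself is a short auxiliary check: blocks of $2^\symExtraBits$ consecutive update values indexed by $\symIndexJ < 64-\symPrecision-\symExtraBits$ each contribute $2^{-\symIndexJ-1}$, while the cap in $\symExpFunc$ causes the saturated last block to contribute $2^\symExtraBits \cdot 2^{-(64-\symPrecision)} = 2^{-(64-\symPrecision-\symExtraBits)}$, which is exactly the missing tail of the geometric series, so the total equals $1$.

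The main computation then groups the update values $\symUpdateVal$ by their block index $\symIndexJ = \lfloor(\symUpdateVal-1)/2^\symExtraBits\rfloor$, on which $\symDensityUpdate$ is constant. Setting $\symIndexJ^* := \lfloor(\symMaxUpdateVal-1)/2^\symExtraBits\rfloor$, I will decompose $T$ into $\symIndexJ^*$ complete blocks with total mass $1 - 2^{-\symIndexJ^*}$, plus a partial block from $\symIndexJ^*2^\symExtraBits+1$ to $\symMaxUpdateVal$ contributing $(\symMaxUpdateVal - \symIndexJ^* 2^\symExtraBits) \cdot 2^{-\symExpFunc(\symMaxUpdateVal)}$. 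Writing $S = 1 - T$ over the common denominator $2^{\symExpFunc(\symMaxUpdateVal)}$ and substituting the relation $\symIndexJ^* = \symExpFunc(\symMaxUpdateVal) - \symExtraBits - 1$, valid whenever $\symMaxUpdateVal$ lies in a non-saturated block, delivers the target formula after one line of algebra.

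Two edge cases will require separate bookkeeping. For $\symMaxUpdateVal = 0$ one has $\symExpFunc(0) = \symExtraBits$ (since $\lfloor -1/2^\symExtraBits\rfloor = -1$), the complementary sum $T$ is empty, and the right-hand side evaluates to $1$, consistent with $S = 1$. For $\symMaxUpdateVal$ in the saturated final block, i.e., $\symIndexJ^* = 64-\symPrecision-\symExtraBits$, the partial-block density equals $2^{-(64-\symPrecision)}$ rather than $2^{-(\symExtraBits+1+\symIndexJ^*)}$, and the substitution becomes $\symIndexJ^* = \symExpFunc(\symMaxUpdateVal) - \symExtraBits$; the same algebraic manipulation nevertheless recovers the same closed form, because the exponent shift is exactly offset by the index shift. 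The main obstacle, such as it is, is this uniform handling of the cap in $\symExpFunc$; once one observes that the final formula is insensitive to whether the truncation at $64-\symPrecision$ is active, the rest is elementary.
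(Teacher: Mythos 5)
Your proof is correct, but it takes a genuinely different route from the paper. The paper proves the identity by backward induction on $\symMaxUpdateVal$, starting from the empty sum at $\symMaxUpdateVal=(65-\symPrecision-\symExtraBits)2^\symExtraBits$ and showing the step from $\symMaxUpdateVal$ to $\symMaxUpdateVal-1$ separately for the two cases $\symExpFunc(\symMaxUpdateVal)-\symExpFunc(\symMaxUpdateVal-1)\in\lbrace 0,1\rbrace$; the saturation cap in $\symExpFunc$ never has to be confronted head-on because it is absorbed into the case distinction. You instead compute the prefix sum in closed form by grouping values into blocks of size $2^\symExtraBits$ on which $\symDensityUpdate$ is constant, and recover the tail via the normalization $\sum\symDensityUpdate=1$. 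I checked your block arithmetic: in the non-saturated case $\symIndexJ^{*}=\symExpFunc(\symMaxUpdateVal)-\symExtraBits-1$ gives $2^{-\symIndexJ^{*}}-(\symMaxUpdateVal-\symIndexJ^{*}2^{\symExtraBits})2^{-\symExpFunc(\symMaxUpdateVal)}=2^{-\symExpFunc(\symMaxUpdateVal)}\bigl(2^{\symExtraBits}(1-\symExtraBits+\symExpFunc(\symMaxUpdateVal))-\symMaxUpdateVal\bigr)$, and in the saturated case the lost factor of $2$ in $2^{-\symIndexJ^{*}}$ is indeed exactly compensated by the shift $\symIndexJ^{*}=\symExpFunc(\symMaxUpdateVal)-\symExtraBits$, so both cases land on the stated formula; the $\symMaxUpdateVal=0$ edge case also checks out since $\symExpFunc(0)=\symExtraBits$. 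What your approach buys is a reusable closed form for the partial sums (and an explicit verification that $\symDensityUpdate$ is a proper \ac{PMF}, which the paper states without proof only in the hash-token context); what the paper's induction buys is a shorter, more mechanical verification that never needs the normalization identity or the separate bookkeeping for the final block. Both are complete proofs.
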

\begin{proof}
	This formula can be proven by induction. For $\symMaxUpdateVal = (65 - \symPrecision - \symExtraBits)2^\symExtraBits$ both sides are zero.
	If the identity holds for $\symMaxUpdateVal$, it can be shown that it also holds for $\symMaxUpdateVal-1$. Since $\symExpFunc(\symMaxUpdateVal)-\symExpFunc(\symMaxUpdateVal-1)\in\lbrace 0,1\rbrace$, we show the identity for both possible cases. First, if $\symExpFunc(\symMaxUpdateVal)=\symExpFunc(\symMaxUpdateVal-1)$:
	\begin{multline*}
		\textstyle
		\sum_{\symUpdateVal = (\symMaxUpdateVal-1)+1}^{(65 - \symPrecision - \symExtraBits)2^\symExtraBits}\symDensityUpdate(\symUpdateVal)
		=
		\symDensityUpdate(\symMaxUpdateVal) + \sum_{\symUpdateVal = \symMaxUpdateVal+1}^{(65 - \symPrecision - \symExtraBits)2^\symExtraBits}\symDensityUpdate(\symUpdateVal)
		\\
		=
		\textstyle
		\frac{1}{2^{\symExpFunc(\symMaxUpdateVal)}}
		+
		\frac{2^{\symExtraBits}(1-\symExtraBits +\symExpFunc(\symMaxUpdateVal)) - \symMaxUpdateVal }{2^{\symExpFunc(\symMaxUpdateVal)}}
		=
		\frac{2^{\symExtraBits}(1-\symExtraBits +\symExpFunc(\symMaxUpdateVal-1)) - (\symMaxUpdateVal-1) }{2^{\symExpFunc(\symMaxUpdateVal-1)}}.
	\end{multline*}
	And, second, if $\symExpFunc(\symMaxUpdateVal)=\symExpFunc(\symMaxUpdateVal-1)+1$, which implies according to \eqref{equ:exponent_func} $\symMaxUpdateVal - 1 \equiv 0 \pmod{2^\symExtraBits}$ and further $\symExpFunc(\symMaxUpdateVal)=\symExtraBits+1+\frac{\symMaxUpdateVal-1}{2^\symExtraBits}\Leftrightarrow 2^\symExtraBits(\symExpFunc(\symMaxUpdateVal)-\symExtraBits-1)-\symMaxUpdateVal+1=0$. With the help of this identity we can show again
	\begin{multline*}
		\textstyle
		\sum_{\symUpdateVal = (\symMaxUpdateVal-1)+1}^{(65 - \symPrecision - \symExtraBits)2^\symExtraBits}\symDensityUpdate(\symUpdateVal)
		=
		\symDensityUpdate(\symMaxUpdateVal) + \sum_{\symUpdateVal = \symMaxUpdateVal+1}^{(65 - \symPrecision - \symExtraBits)2^\symExtraBits}\symDensityUpdate(\symUpdateVal)
		\\
		\begin{aligned}
			 & =
			\textstyle
			\frac{1}{2^{\symExpFunc(\symMaxUpdateVal)}}
			+
			\frac{2^{\symExtraBits}(1-\symExtraBits +\symExpFunc(\symMaxUpdateVal)) - \symMaxUpdateVal }{2^{\symExpFunc(\symMaxUpdateVal)}}
			+
			\frac{2^\symExtraBits(\symExpFunc(\symMaxUpdateVal)-\symExtraBits-1)-\symMaxUpdateVal+1}{2^{\symExpFunc(\symMaxUpdateVal)}}
			\\
			 & =
			\textstyle
			\frac{2^{\symExtraBits}(-2\symExtraBits +2\symExpFunc(\symMaxUpdateVal)) - 2(\symMaxUpdateVal-1) }{2^{\symExpFunc(\symMaxUpdateVal)}}
			=
			\frac{2^{\symExtraBits}(1-\symExtraBits +\symExpFunc(\symMaxUpdateVal-1)) - (\symMaxUpdateVal-1) }{2^{\symExpFunc(\symMaxUpdateVal-1)}}.
		\end{aligned}
	\end{multline*}
\end{proof}

\begin{lemma}
	\label{lem:inc_and_concave}
	The function { $\symFunc(\symX)$} as defined in \eqref{equ:ml_func} is strictly increasing and concave for {$\symX\geq 0$} and { $\symLikelihoodFuncExponentOne>0$}.
\end{lemma}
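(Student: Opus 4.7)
The plan is to exploit the elementary factorization
\begin{equation*}
(1+\symX)^{2^\symIndexJ} - 1 = \symX\cdot\prod_{\symIndexL=0}^{\symIndexJ-1}\bigl((1+\symX)^{2^\symIndexL}+1\bigr),
\end{equation*}
which follows by iterating $y^2-1=(y-1)(y+1)$. This turns the $\symIndexJ$-th summand of $\symSumA$ (and hence of $\symFunc$) into the product $\symFactorOne_\symIndexJ(\symX)=\prod_{\symIndexL=0}^{\symIndexJ-1} h_\symIndexL(\symX)$ with $h_\symIndexL(\symX):=\frac{2}{(1+\symX)^{2^\symIndexL}+1}$, exactly the recursion already used in \eqref{equ:factor_one_def}. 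It therefore suffices to show that each factor $h_\symIndexL$ is positive, strictly decreasing, and convex on $[0,\infty)$ and then invoke a general closure property under finite products.

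Positivity and strict monotonicity of $h_\symIndexL$ are immediate since $(1+\symX)^{2^\symIndexL}$ is positive and strictly increasing. For convexity, a direct differentiation in $t=1+\symX$ with $m=2^\symIndexL$ gives
\begin{equation*}
\frac{d^2 h_\symIndexL}{d\symX^2} = \frac{2m\, t^{m-2}\bigl[(m+1)t^{m}-(m-1)\bigr]}{(t^m+1)^3},
\end{equation*}
whose numerator is strictly positive for $t\geq 1$ because $(m+1)t^m\geq m+1>m-1$. A one-line induction then shows that the product of two positive, decreasing, convex functions $h_1,h_2$ is again positive, decreasing, and convex: indeed $(h_1 h_2)'=h_1' h_2+h_1 h_2'\leq 0$ and $(h_1 h_2)''=h_1'' h_2+2h_1' h_2'+h_1 h_2''\geq 0$, the middle summand being non-negative because both $h_1'$ and $h_2'$ are non-positive. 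Iterating yields that each $\symFactorOne_\symIndexJ$ with $\symIndexJ\geq 1$ is positive, strictly decreasing, and convex on $[0,\infty)$; the $\symIndexJ=0$ term is the constant $1$ and contributes nothing to the derivatives.

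Combining these facts with the representation $\symFunc(\symX)=\symLikelihoodFuncExponentOne 2^{\symMaxUpdateValMax}\symX-\sum_{\symIndexJ=0}^{\symMaxUpdateValMax-\symMaxUpdateValMin}\symLikelihoodFuncExponentTwo_{\symMaxUpdateValMax-\symIndexJ}\symFactorOne_\symIndexJ(\symX)$, I obtain $\symFunc'(\symX)=\symLikelihoodFuncExponentOne 2^{\symMaxUpdateValMax}-\sum_\symIndexJ\symLikelihoodFuncExponentTwo_{\symMaxUpdateValMax-\symIndexJ}\symFactorOne_\symIndexJ'(\symX)\geq\symLikelihoodFuncExponentOne 2^{\symMaxUpdateValMax}>0$ (using $\symLikelihoodFuncExponentOne>0$, $\symLikelihoodFuncExponentTwo_{\symMaxUpdateValMax-\symIndexJ}\geq 0$, and $\symFactorOne_\symIndexJ'\leq 0$), so $\symFunc$ is strictly increasing; and $\symFunc''(\symX)=-\sum_\symIndexJ\symLikelihoodFuncExponentTwo_{\symMaxUpdateValMax-\symIndexJ}\symFactorOne_\symIndexJ''(\symX)\leq 0$, so $\symFunc$ is concave. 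The only step that requires any real calculation is the convexity of the single factor $h_\symIndexL$; once the difference-of-squares factorization is in place, everything else is routine bookkeeping.
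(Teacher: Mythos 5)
Your proof is correct, but it takes a genuinely different route from the paper's. The paper reduces the claim to showing that $\symOtherFunc(\symX):=-\symX/((1+\symX)^{\symSomeConstant}-1)$ is increasing and concave for an \emph{arbitrary} positive integer exponent $\symSomeConstant$, and does so by direct differentiation: the sign of $\symOtherFunc'$ comes from expanding $(1+\symX)^{\symSomeConstant}-1$ as a geometric sum, and the sign of $\symOtherFunc''$ from a majorization argument (Karamata's inequality) on the exponents appearing in the numerator. You instead exploit the fact that the exponents in \eqref{equ:ml_func} are powers of two: the telescoping difference-of-squares factorization $(1+\symX)^{2^{\symIndexJ}}-1=\symX\prod_{\symIndexL=0}^{\symIndexJ-1}((1+\symX)^{2^{\symIndexL}}+1)$ turns each summand into the product $\symFactorOne_{\symIndexJ}=\prod_{\symIndexL=0}^{\symIndexJ-1}\frac{2}{(1+\symX)^{2^{\symIndexL}}+1}$ that the paper already uses in \eqref{equ:def_suma} and \eqref{equ:factor_one_def}, and you then check that each factor is positive, decreasing, and convex and invoke closure of that class under finite products. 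I verified your second-derivative formula for $h_{\symIndexL}$ and the product-closure step (the cross term $2h_1'h_2'$ is nonnegative because both derivatives are nonpositive); both are sound, and the conclusion $\symFunc'\geq\symLikelihoodFuncExponentOne 2^{\symMaxUpdateValMax}>0$, $\symFunc''\leq 0$ follows exactly as in the paper since the coefficients $\symLikelihoodFuncExponentTwo_{\symMaxUpdateValMax-\symIndexJ}$ are nonnegative. Your route is more elementary (no majorization needed) and ties the analytic claim directly to the numerical recursion \eqref{equ:factor_one_def}; the paper's route is more general, since it covers any integer exponent $\symSomeConstant\in\mathbb{Z}^{+}$ and would therefore survive a change of update-value distribution away from power-of-two probabilities.
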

\begin{proof}
	It is sufficient to show that $\symOtherFunc(\symX):=-\frac{\symX}{(1+\symX)^\symSomeConstant-1}$ with $\symSomeConstant\in\mathbb{Z}^{+}$ is increasing and concave, which is the case if $\symOtherFunc'(\symX)\geq 0$ and $\symOtherFunc''(\symX)\leq 0$. The first derivative is given by
	\begin{align*}
		% https://www.wolframalpha.com/input?i=d%2Fdx+-x%2F%28%281%2Bx%29%5Ec-1%29
		\symOtherFunc'(\symX)
		 & =
		\textstyle\frac{\symSomeConstant \symX (1 + \symX)^{\symSomeConstant-1} - ((1 + \symX)^{\symSomeConstant}-1)}{((1 + \symX)^\symSomeConstant-1)^2}
		=
		\textstyle\frac{\symSomeConstant \symX (1 + \symX)^{\symSomeConstant-1} - \sum_{\symIndexJ=0}^{\symSomeConstant-1} \symX (1 + \symX)^{\symIndexJ}}{((1 + \symX)^\symSomeConstant-1)^2}
		\\
		 & =
		\textstyle\frac{\symX \sum_{\symIndexJ=0}^{\symSomeConstant-2}(1 + \symX)^{\symSomeConstant-1} - (1 + \symX)^{\symIndexJ}}{((1 + \symX)^\symSomeConstant-1)^2}
		\geq 0,
	\end{align*}
	which is nonnegative.
	The second derivative can be expressed as
	\begin{align*}
		% https://www.wolframalpha.com/input?i=d%5E2%2Fdx%5E2+-x%2F%28%281%2Bx%29%5Ec-1%29
		\symOtherFunc''(\symX)
		 & =
		\textstyle
		\frac{
			\symSomeConstant (1 + \symX)^{\symSomeConstant-2} \left((2 + \symX) ((1 + \symX)^\symSomeConstant-1) - \symSomeConstant \symX (1 + (1 + \symX)^{\symSomeConstant})\right)
		}{
			((1 + \symX)^{\symSomeConstant}-1)^3
		}
		\\
		 & =
		\textstyle
		\frac{
			\symSomeConstant (1 + \symX)^{\symSomeConstant-2} \left(2 ((1 + \symX)^\symSomeConstant-1) - (\symSomeConstant+1) \symX(1 + \symX)^0 - (\symSomeConstant-1) \symX (1 + \symX)^{\symSomeConstant}\right)
		}{
			((1 + \symX)^{\symSomeConstant}-1)^3
		}
		\\
		 & =
		\textstyle
		-\frac{
			\symSomeConstant (1 + \symX)^{\symSomeConstant-2} \symX
			\left((\symSomeConstant-1) (1 + \symX)^{\symSomeConstant} + (\symSomeConstant-1) (1 + \symX)^0
			-
			2\sum_{\symIndexJ=1}^{\symSomeConstant-1}
			(1 + \symX)^{\symIndexJ}
			\right)
		}{
			((1 + \symX)^{\symSomeConstant}-1)^3
		}.
	\end{align*}
	Since $(1+\symX)^\symY$ is convex with respect to $\symY$,
	\begin{multline*}
		\overbrace{(1+\symX)^{\symSomeConstant} + \ldots + (1+\symX)^{\symSomeConstant}}^{\text{$(\symSomeConstant-1)$ terms}}+\overbrace{(1+\symX)^{0}+\ldots+(1+\symX)^{0}}^{\text{$(\symSomeConstant-1)$ terms}}
		\\
		\geq
		\overbrace{(1+\symX)^{\symSomeConstant-1} + (1+\symX)^{\symSomeConstant-1} + \ldots + (1+\symX)^{1} + (1+\symX)^{1}}^{\text{$2(\symSomeConstant-1)$ terms}}
	\end{multline*}
	holds according to Karamata's inequality, which shows that the last factor of the numerator is nonnegative. Since all factors are nonnegative we have $\symOtherFunc''(\symX)\leq 0$.
\end{proof}

\begin{lemma}
	\label{lem:inequality}
	The root $\symXZero$ of $\symFunc$ as defined in \eqref{equ:ml_func} can be bracketed by
	$
		\exp(
		\ln\!\left(
			1
			+
			\frac{
				\symLikelihoodFuncExponentTwoPow
			}
			{
				\symLikelihoodFuncExponentOne 2^{\symMaxUpdateValMax}
			}
			\right)
		\frac
			{
				\symLikelihoodFuncExponentTwoSum
			}
			{
				\symLikelihoodFuncExponentTwoPow	}
		)
		-1
		\leq
		\symXZero
		\leq
		\frac{\symLikelihoodFuncExponentTwoSum}{\symLikelihoodFuncExponentOne 2^{\symMaxUpdateValMax}}
	$
	where {$\symLikelihoodFuncExponentTwoSum:=\sum_{\symIndexJ=\symMaxUpdateValMin}^{\symMaxUpdateValMax}
				\symLikelihoodFuncExponentTwo_{\symIndexJ}$} and {$\symLikelihoodFuncExponentTwoPow:=\sum_{\symIndexJ=\symMaxUpdateValMin}^{\symMaxUpdateValMax}
				\symLikelihoodFuncExponentTwo_{\symIndexJ}
				2^{\symMaxUpdateValMax - \symIndexJ}$}.
\end{lemma}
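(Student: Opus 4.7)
The plan is to exhibit $\symX_0$ and $\symLikelihoodFuncExponentTwoSum/(\symLikelihoodFuncExponentOne 2^{\symMaxUpdateValMax})$ as a bracket of the root $\symXZero$ of $\symFunc$, i.e.\ to show $\symFunc(\symX_0)\le 0 \le \symFunc\bigl(\symLikelihoodFuncExponentTwoSum/(\symLikelihoodFuncExponentOne 2^{\symMaxUpdateValMax})\bigr)$, and then conclude by the strict monotonicity of $\symFunc$ established in \Cref{lem:inc_and_concave}.

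The upper bound is essentially immediate. Bernoulli's inequality gives $(1+\symX)^{2^{\symIndexJ}}\ge 1+2^{\symIndexJ}\symX$ for $\symX\ge 0$, hence $\tfrac{2^{\symIndexJ}\symX}{(1+\symX)^{2^{\symIndexJ}}-1}\le 1$ termwise, which sums to $\symSumA(\symX)\le \symLikelihoodFuncExponentTwoSum$ for every $\symX\ge 0$. Combined with $\symFunc(\symXZero)=0$, i.e.\ $\symLikelihoodFuncExponentOne 2^{\symMaxUpdateValMax}\symXZero=\symSumA(\symXZero)$, this yields $\symXZero\le \symLikelihoodFuncExponentTwoSum/(\symLikelihoodFuncExponentOne 2^{\symMaxUpdateValMax})$.

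For the lower bound, I would first rearrange the defining equation for $\symX_0$ into the equivalent form $\symLikelihoodFuncExponentOne 2^{\symMaxUpdateValMax}\bigl((1+\symX_0)^{\symLikelihoodFuncExponentTwoPow/\symLikelihoodFuncExponentTwoSum}-1\bigr) = \symLikelihoodFuncExponentTwoPow$, so that the desired inequality $\symFunc(\symX_0)\le 0$ becomes, after dividing by $\symLikelihoodFuncExponentTwoSum\,\symX_0$,
\begin{equation*}
\frac{\symLikelihoodFuncExponentTwoPow/\symLikelihoodFuncExponentTwoSum}{(1+\symX_0)^{\symLikelihoodFuncExponentTwoPow/\symLikelihoodFuncExponentTwoSum}-1} \le \sum_{\symIndexJ=0}^{\symMaxUpdateValMax-\symMaxUpdateValMin}\frac{\symLikelihoodFuncExponentTwo_{\symMaxUpdateValMax-\symIndexJ}}{\symLikelihoodFuncExponentTwoSum}\cdot\frac{2^{\symIndexJ}}{(1+\symX_0)^{2^{\symIndexJ}}-1}.
\end{equation*}
The normalised coefficients $q_\symIndexJ:=\symLikelihoodFuncExponentTwo_{\symMaxUpdateValMax-\symIndexJ}/\symLikelihoodFuncExponentTwoSum$ sum to $1$, and their weighted mean $\sum_\symIndexJ q_\symIndexJ 2^\symIndexJ$ equals $\symLikelihoodFuncExponentTwoPow/\symLikelihoodFuncExponentTwoSum$, so I plan to recognise this as Jensen's inequality for the function $\psi(\tau):=\tau/((1+\symX_0)^{\tau}-1)$ evaluated at the points $\tau_\symIndexJ=2^\symIndexJ$. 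The whole lower bound thereby reduces to the convexity of $\psi$ on $(0,\infty)$.

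Convexity of $\psi$ will in turn follow from convexity of $\symOtherFunc(\symZ):=\symZ/(e^{\symZ}-1)$ via the positive affine rescaling $\psi(\tau)=\symOtherFunc(\tau\ln(1+\symX_0))/\ln(1+\symX_0)$. A routine differentiation gives
\begin{equation*}
\symOtherFunc''(\symZ) = \frac{e^{\symZ}\bigl(\symZ(e^{\symZ}+1)-2(e^{\symZ}-1)\bigr)}{(e^{\symZ}-1)^{3}},
\end{equation*}
and the numerator $\symRegMartingale(\symZ):=\symZ(e^{\symZ}+1)-2(e^{\symZ}-1)$ is nonnegative on $[0,\infty)$ because $\symRegMartingale(0)=\symRegMartingale'(0)=0$ together with $\symRegMartingale''(\symZ)=\symZ e^{\symZ}\ge 0$. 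The main obstacle is spotting the Jensen reformulation after the appropriate normalisation; once that step is in place, the convexity of $\symOtherFunc$ is a short elementary calculation and the rest is bookkeeping.
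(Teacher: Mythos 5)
Your proposal is correct and follows essentially the same route as the paper: Bernoulli's inequality for the upper bound, and Jensen's inequality applied to the convexity of $\symZ/(e^\symZ-1)$ (equivalently of $\symY\mapsto \symX\symY/((1+\symX)^\symY-1)$) with weights $\symLikelihoodFuncExponentTwo_{\symMaxUpdateValMax-\symIndexJ}/\symLikelihoodFuncExponentTwoSum$ at the nodes $2^\symIndexJ$ for the lower bound. The only cosmetic differences are that you evaluate $\symFunc$ at the candidate point $\symX_0$ and invoke the monotonicity from \Cref{lem:inc_and_concave}, whereas the paper applies Jensen at the root $\symXZero$ and resolves the resulting inequality for $\symXZero$, and that you verify the convexity of $\symZ/(e^\symZ-1)$ explicitly where the paper merely asserts it.
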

\begin{proof}
	We rewrite $\symFunc(\symXZero) = 0$ as
	\begin{equation}
		\label{equ:inequ_proof}
		\textstyle
		\symLikelihoodFuncExponentOne 2^{\symMaxUpdateValMax}\symXZero
		-
		\symLikelihoodFuncExponentTwoSum
		\frac{\sum_{\symIndexJ=0}^{\symMaxUpdateValMax - \symMaxUpdateValMin}
			\symLikelihoodFuncExponentTwo_{\symMaxUpdateValMax - \symIndexJ} \symOtherFunc(\symXZero, 2^\symIndexJ)}
		{
			\sum_{\symIndexJ=0}^{\symMaxUpdateValMax - \symMaxUpdateValMin}
			\symLikelihoodFuncExponentTwo_{\symMaxUpdateValMax - \symIndexJ}}
		=
		0
	\end{equation}
	with $\symOtherFunc(\symX, \symY) := \frac{\symX \symY }{(1+\symX)^{\symY}-1}$. As $\symOtherFunc$ is convex with respect to $\symY$, which follows from $\frac{\symZ}{e^\symZ-1}$ being convex with $\symZ = \ln(1+\symX)\symY$, we can apply Jensen's inequality
	\begin{multline*}
		\textstyle
		\symLikelihoodFuncExponentOne 2^{\symMaxUpdateValMax}\symXZero
		-
		\symLikelihoodFuncExponentTwoSum
		\cdot
		\symOtherFunc\!
		\left(
		\symXZero,
		\frac{\symLikelihoodFuncExponentTwoPow}
		{
			\symLikelihoodFuncExponentTwoSum
		}
		\right)
		\\
		\textstyle
		=
		\symLikelihoodFuncExponentOne 2^{\symMaxUpdateValMax}\symXZero
		-
		\symLikelihoodFuncExponentTwoSum
		\cdot
		\symOtherFunc\!
		\left(
		\symXZero,
		\frac{\sum_{\symIndexJ=0}^{\symMaxUpdateValMax - \symMaxUpdateValMin}
			\symLikelihoodFuncExponentTwo_{\symMaxUpdateValMax - \symIndexJ} 2^\symIndexJ}
		{
			\sum_{\symIndexJ=0}^{\symMaxUpdateValMax - \symMaxUpdateValMin}
			\symLikelihoodFuncExponentTwo_{\symMaxUpdateValMax - \symIndexJ}}
		\right)
		\\
		\textstyle
		\geq
		\symLikelihoodFuncExponentOne 2^{\symMaxUpdateValMax}\symXZero
		-
		\symLikelihoodFuncExponentTwoSum
		\frac{\sum_{\symIndexJ=0}^{\symMaxUpdateValMax - \symMaxUpdateValMin}
			\symLikelihoodFuncExponentTwo_{\symMaxUpdateValMax - \symIndexJ} \symOtherFunc(\symXZero, 2^\symIndexJ)}
		{
			\sum_{\symIndexJ=0}^{\symMaxUpdateValMax - \symMaxUpdateValMin}
			\symLikelihoodFuncExponentTwo_{\symMaxUpdateValMax - \symIndexJ}}
		=
		0.
	\end{multline*}
	Resolving for $\symXZero$ finally gives the lower bound. The upper bound results from \eqref{equ:inequ_proof} when using $\symOtherFunc(\symX, \symY)\leq 1$ that is a consequence of Bernoulli's inequality.
\end{proof}
\bibliographystyle{ACM-Reference-Format}
\bibliography{bibliography}

\end{document}